\documentclass[review]{elsarticle}
%%%%%%%%%%%%%%%%%%%%%%%%%%%%%%%%%%%%%%%%%%%%%%%%%%%%%%%%%%%%%%%%%%%%%%%%%%%%%%%%%%%%%%%%%%%%%%%%%%%%%%%%%%%%%%%%%%%%%%%%%%%%%%%%%%%%%%%%%%%%%%%%%%%%%%%%%%%%%%%%%%%%%%%%%%%%%%%%%%%%%%%%%%%%%%%%%%%%%%%%%%%%%%%%%%%%%%%%%%%%%%%%%%%%%%%%%%%%%%%%%%%%%%%%%%%%
\usepackage{amsfonts}
\usepackage{lineno,hyperref}

\usepackage{amsmath,amssymb,amsfonts}
\usepackage{algorithmic}
\usepackage{graphicx}
\usepackage{textcomp}

\usepackage{graphicx}
\usepackage{subfigure}
\usepackage{epsfig}
%TCIDATA{OutputFilter=Latex.dll}
%TCIDATA{Version=5.50.0.2960}
%TCIDATA{<META NAME="SaveForMode" CONTENT="1">}
%TCIDATA{BibliographyScheme=BibTeX}
%TCIDATA{LastRevised=Tuesday, October 11, 2016 11:48:00}
%TCIDATA{<META NAME="GraphicsSave" CONTENT="32">}

\modulolinenumbers[5]
\journal{Systems \& Control Letters}
\bibliographystyle{elsarticle-num}

\newtheorem{theorem}{\textbf{Theorem}}
\newtheorem{lemma}{\textbf{Lemma}}
\newtheorem{proposition}{Proposition}
\newtheorem{corollary}{Corollary}
\newtheorem{definition}{Definition}
\newtheorem{remark}{\textbf{Remark}}
\newtheorem{assumption}{\textbf{Assumption}}
%\newenvironment{proof}[1][Proof]{\noindent\textbf{#1.} }{\ \rule{0.5em}{0.5em}}
% Macros for Scientific Word and Scientific WorkPlace 5.5 documents saved with the LaTeX filter.
% Copyright (C) 2005 Mackichan Software, Inc.

\typeout{TCILATEX Macros for Scientific Word and Scientific WorkPlace 5.5 <06 Oct 2005>.}
\typeout{NOTICE:  This macro file is NOT proprietary and may be 
freely copied and distributed.}
\makeatletter

%%%%%%%%%%%%%%%%%%%%%
% pdfTeX related.
\ifx\pdfoutput\relax\let\pdfoutput=\undefined\fi
\newcount\msipdfoutput
\ifx\pdfoutput\undefined
\else
 \ifcase\pdfoutput
 \else 
    \msipdfoutput=1
    \ifx\paperwidth\undefined
    \else
      \ifdim\paperheight=0pt\relax
      \else
        \pdfpageheight\paperheight
      \fi
      \ifdim\paperwidth=0pt\relax
      \else
        \pdfpagewidth\paperwidth
      \fi
    \fi
  \fi  
\fi

%%%%%%%%%%%%%%%%%%%%%
% FMTeXButton
% This is used for putting TeXButtons in the 
% frontmatter of a document. Add a line like
% \QTagDef{FMTeXButton}{101}{} to the filter 
% section of the cst being used. Also add a
% new section containing:
%     [f_101]
%     ALIAS=FMTexButton
%     TAG_TYPE=FIELD
%     TAG_LEADIN=TeX Button:
%
% It also works to put \defs in the preamble after 
% the \input tcilatex

%
%%%%%%%%%%%%%%%%%%%%%%
% macros for time
\newcount\@hour\newcount\@minute\chardef\@x10\chardef\@xv60
\def\tcitime{
\def\@time{%
  \@minute\time\@hour\@minute\divide\@hour\@xv
  \ifnum\@hour<\@x 0\fi\the\@hour:%
  \multiply\@hour\@xv\advance\@minute-\@hour
  \ifnum\@minute<\@x 0\fi\the\@minute
  }}%

%%%%%%%%%%%%%%%%%%%%%%
% macro for hyperref and msihyperref
%\@ifundefined{hyperref}{\def\hyperref#1#2#3#4{#2\ref{#4}#3}}{}

\def\x@hyperref#1#2#3{%
   % Turn off various catcodes before reading parameter 4
   \catcode`\~ = 12
   \catcode`\$ = 12
   \catcode`\_ = 12
   \catcode`\# = 12
   \catcode`\& = 12
   \catcode`\% = 12
   \y@hyperref{#1}{#2}{#3}%
}

\def\y@hyperref#1#2#3#4{%
   #2\ref{#4}#3
   \catcode`\~ = 13
   \catcode`\$ = 3
   \catcode`\_ = 8
   \catcode`\# = 6
   \catcode`\& = 4
   \catcode`\% = 14
}

\@ifundefined{hyperref}{\let\hyperref\x@hyperref}{}
\@ifundefined{msihyperref}{\let\msihyperref\x@hyperref}{}

% macro for external program call
\@ifundefined{qExtProgCall}{\def\qExtProgCall#1#2#3#4#5#6{\relax}}{}
%%%%%%%%%%%%%%%%%%%%%%
%
% macros for graphics
%
%
%
\def\QCTOpt[#1]#2{%
  \def\QCTOptB{#1}
  \def\QCTOptA{#2}
}
\def\QCTNOpt#1{%
  \def\QCTOptA{#1}
  \let\QCTOptB\empty
}
\def\Qct{%
  \@ifnextchar[{%
    \QCTOpt}{\QCTNOpt}
}
\def\QCBOpt[#1]#2{%
  \def\QCBOptB{#1}%
  \def\QCBOptA{#2}%
}
\def\QCBNOpt#1{%
  \def\QCBOptA{#1}%
  \let\QCBOptB\empty
}
\def\Qcb{%
  \@ifnextchar[{%
    \QCBOpt}{\QCBNOpt}%
}
\def\PrepCapArgs{%
  \ifx\QCBOptA\empty
    \ifx\QCTOptA\empty
      {}%
    \else
      \ifx\QCTOptB\empty
        {\QCTOptA}%
      \else
        [\QCTOptB]{\QCTOptA}%
      \fi
    \fi
  \else
    \ifx\QCBOptA\empty
      {}%
    \else
      \ifx\QCBOptB\empty
        {\QCBOptA}%
      \else
        [\QCBOptB]{\QCBOptA}%
      \fi
    \fi
  \fi
}
\newcount\GRAPHICSTYPE
%\GRAPHICSTYPE 0 is for TurboTeX
%\GRAPHICSTYPE 1 is for DVIWindo (PostScript)
%%%(removed)%\GRAPHICSTYPE 2 is for psfig (PostScript)
\GRAPHICSTYPE=\z@
\def\GRAPHICSPS#1{%
 \ifcase\GRAPHICSTYPE%\GRAPHICSTYPE=0
   \special{ps: #1}%
 \or%\GRAPHICSTYPE=1
   \special{language "PS", include "#1"}%
%%%\or%\GRAPHICSTYPE=2
%%%  #1%
 \fi
}%
%
%
%
% \graffile{ body }                                  %#1
%          { contentswidth (scalar)  }               %#2
%          { contentsheight (scalar) }               %#3
%          { vertical shift when in-line (scalar) }  %#4

\def\graffile#1#2#3#4{%
%%% \ifnum\GRAPHICSTYPE=\tw@
%%%  %Following if using psfig
%%%  \@ifundefined{psfig}{\input psfig.tex}{}%
%%%  \psfig{file=#1, height=#3, width=#2}%
%%% \else
  %Following for all others
  % JCS - added BOXTHEFRAME, see below
    \bgroup
	   \@inlabelfalse
       \leavevmode
       \@ifundefined{bbl@deactivate}{\def~{\string~}}{\activesoff}%
        \raise -#4 \BOXTHEFRAME{%
           \hbox to #2{\raise #3\hbox to #2{\null #1\hfil}}}%
    \egroup
}%
%
% A box for drafts
\def\draftbox#1#2#3#4{%
 \leavevmode\raise -#4 \hbox{%
  \frame{\rlap{\protect\tiny #1}\hbox to #2%
   {\vrule height#3 width\z@ depth\z@\hfil}%
  }%
 }%
}%
\newcount\@msidraft
\@msidraft=\z@
\let\nographics=\@msidraft
\newif\ifwasdraft
\wasdraftfalse

%  \GRAPHIC{ body }                                  %#1
%          { draft name }                            %#2
%          { contentswidth (scalar)  }               %#3
%          { contentsheight (scalar) }               %#4
%          { vertical shift when in-line (scalar) }  %#5
\def\GRAPHIC#1#2#3#4#5{%
   \ifnum\@msidraft=\@ne\draftbox{#2}{#3}{#4}{#5}%
   \else\graffile{#1}{#3}{#4}{#5}%
   \fi
}
\def\addtoLaTeXparams#1{%
    \edef\LaTeXparams{\LaTeXparams #1}}%
%
% JCS -  added a switch BoxFrame that can 
% be set by including X in the frame params.
% If set a box is drawn around the frame.

\newif\ifBoxFrame \BoxFramefalse
\newif\ifOverFrame \OverFramefalse
\newif\ifUnderFrame \UnderFramefalse

\def\BOXTHEFRAME#1{%
   \hbox{%
      \ifBoxFrame
         \frame{#1}%
      \else
         {#1}%
      \fi
   }%
}

\def\doFRAMEparams#1{\BoxFramefalse\OverFramefalse\UnderFramefalse\readFRAMEparams#1\end}%
\def\readFRAMEparams#1{%
 \ifx#1\end%
  \let\next=\relax
  \else
  \ifx#1i\dispkind=\z@\fi
  \ifx#1d\dispkind=\@ne\fi
  \ifx#1f\dispkind=\tw@\fi
  \ifx#1t\addtoLaTeXparams{t}\fi
  \ifx#1b\addtoLaTeXparams{b}\fi
  \ifx#1p\addtoLaTeXparams{p}\fi
  \ifx#1h\addtoLaTeXparams{h}\fi
  \ifx#1X\BoxFrametrue\fi
  \ifx#1O\OverFrametrue\fi
  \ifx#1U\UnderFrametrue\fi
  \ifx#1w
    \ifnum\@msidraft=1\wasdrafttrue\else\wasdraftfalse\fi
    \@msidraft=\@ne
  \fi
  \let\next=\readFRAMEparams
  \fi
 \next
 }%
%
%Macro for In-line graphics object
%   \IFRAME{ contentswidth (scalar)  }               %#1
%          { contentsheight (scalar) }               %#2
%          { vertical shift when in-line (scalar) }  %#3
%          { draft name }                            %#4
%          { body }                                  %#5
%          { caption}                                %#6

\def\IFRAME#1#2#3#4#5#6{%
      \bgroup
      \let\QCTOptA\empty
      \let\QCTOptB\empty
      \let\QCBOptA\empty
      \let\QCBOptB\empty
      #6%
      \parindent=0pt
      \leftskip=0pt
      \rightskip=0pt
      \setbox0=\hbox{\QCBOptA}%
      \@tempdima=#1\relax
      \ifOverFrame
          % Do this later
          \typeout{This is not implemented yet}%
          \show\HELP
      \else
         \ifdim\wd0>\@tempdima
            \advance\@tempdima by \@tempdima
            \ifdim\wd0 >\@tempdima
               \setbox1 =\vbox{%
                  \unskip\hbox to \@tempdima{\hfill\GRAPHIC{#5}{#4}{#1}{#2}{#3}\hfill}%
                  \unskip\hbox to \@tempdima{\parbox[b]{\@tempdima}{\QCBOptA}}%
               }%
               \wd1=\@tempdima
            \else
               \textwidth=\wd0
               \setbox1 =\vbox{%
                 \noindent\hbox to \wd0{\hfill\GRAPHIC{#5}{#4}{#1}{#2}{#3}\hfill}\\%
                 \noindent\hbox{\QCBOptA}%
               }%
               \wd1=\wd0
            \fi
         \else
            \ifdim\wd0>0pt
              \hsize=\@tempdima
              \setbox1=\vbox{%
                \unskip\GRAPHIC{#5}{#4}{#1}{#2}{0pt}%
                \break
                \unskip\hbox to \@tempdima{\hfill \QCBOptA\hfill}%
              }%
              \wd1=\@tempdima
           \else
              \hsize=\@tempdima
              \setbox1=\vbox{%
                \unskip\GRAPHIC{#5}{#4}{#1}{#2}{0pt}%
              }%
              \wd1=\@tempdima
           \fi
         \fi
         \@tempdimb=\ht1
         %\advance\@tempdimb by \dp1
         \advance\@tempdimb by -#2
         \advance\@tempdimb by #3
         \leavevmode
         \raise -\@tempdimb \hbox{\box1}%
      \fi
      \egroup%
}%
%
%Macro for Display graphics object
%   \DFRAME{ contentswidth (scalar)  }               %#1
%          { contentsheight (scalar) }               %#2
%          { draft label }                           %#3
%          { name }                                  %#4
%          { caption}                                %#5
\def\DFRAME#1#2#3#4#5{%
  \vspace\topsep
  \hfil\break
  \bgroup
     \leftskip\@flushglue
	 \rightskip\@flushglue
	 \parindent\z@
	 \parfillskip\z@skip
     \let\QCTOptA\empty
     \let\QCTOptB\empty
     \let\QCBOptA\empty
     \let\QCBOptB\empty
	 \vbox\bgroup
        \ifOverFrame 
           #5\QCTOptA\par
        \fi
        \GRAPHIC{#4}{#3}{#1}{#2}{\z@}%
        \ifUnderFrame 
           \break#5\QCBOptA
        \fi
	 \egroup
  \egroup
  \vspace\topsep
  \break
}%
%
%Macro for Floating graphic object
%   \FFRAME{ framedata f|i tbph x F|T }              %#1
%          { contentswidth (scalar)  }               %#2
%          { contentsheight (scalar) }               %#3
%          { caption }                               %#4
%          { label }                                 %#5
%          { draft name }                            %#6
%          { body }                                  %#7
\def\FFRAME#1#2#3#4#5#6#7{%
 %If float.sty loaded and float option is 'h', change to 'H'  (gp) 1998/09/05
  \@ifundefined{floatstyle}
    {%floatstyle undefined (and float.sty not present), no change
     \begin{figure}[#1]%
    }
    {%floatstyle DEFINED
	 \ifx#1h%Only the h parameter, change to H
      \begin{figure}[H]%
	 \else
      \begin{figure}[#1]%
	 \fi
	}
  \let\QCTOptA\empty
  \let\QCTOptB\empty
  \let\QCBOptA\empty
  \let\QCBOptB\empty
  \ifOverFrame
    #4
    \ifx\QCTOptA\empty
    \else
      \ifx\QCTOptB\empty
        \caption{\QCTOptA}%
      \else
        \caption[\QCTOptB]{\QCTOptA}%
      \fi
    \fi
    \ifUnderFrame\else
      \label{#5}%
    \fi
  \else
    \UnderFrametrue%
  \fi
  \begin{center}\GRAPHIC{#7}{#6}{#2}{#3}{\z@}\end{center}%
  \ifUnderFrame
    #4
    \ifx\QCBOptA\empty
      \caption{}%
    \else
      \ifx\QCBOptB\empty
        \caption{\QCBOptA}%
      \else
        \caption[\QCBOptB]{\QCBOptA}%
      \fi
    \fi
    \label{#5}%
  \fi
  \end{figure}%
 }%
%
%
%    \FRAME{ framedata f|i tbph x F|T }              %#1
%          { contentswidth (scalar)  }               %#2
%          { contentsheight (scalar) }               %#3
%          { vertical shift when in-line (scalar) }  %#4
%          { caption }                               %#5
%          { label }                                 %#6
%          { name }                                  %#7
%          { body }                                  %#8
%
%    framedata is a string which can contain the following
%    characters: idftbphxFT
%    Their meaning is as follows:
%             i, d or f : in-line, display, or floating
%             t,b,p,h   : LaTeX floating placement options
%             x         : fit contents box to contents
%             F or T    : Figure or Table. 
%                         Later this can expand
%                         to a more general float class.
%
%
\newcount\dispkind%

\def\makeactives{
  \catcode`\"=\active
  \catcode`\;=\active
  \catcode`\:=\active
  \catcode`\'=\active
  \catcode`\~=\active
}
\bgroup
   \makeactives
   \gdef\activesoff{%
      \def"{\string"}%
      \def;{\string;}%
      \def:{\string:}%
      \def'{\string'}%
      \def~{\string~}%
      %\bbl@deactivate{"}%
      %\bbl@deactivate{;}%
      %\bbl@deactivate{:}%
      %\bbl@deactivate{'}%
    }
\egroup

\def\FRAME#1#2#3#4#5#6#7#8{%
 \bgroup
 \ifnum\@msidraft=\@ne
   \wasdrafttrue
 \else
   \wasdraftfalse%
 \fi
 \def\LaTeXparams{}%
 \dispkind=\z@
 \def\LaTeXparams{}%
 \doFRAMEparams{#1}%
 \ifnum\dispkind=\z@\IFRAME{#2}{#3}{#4}{#7}{#8}{#5}\else
  \ifnum\dispkind=\@ne\DFRAME{#2}{#3}{#7}{#8}{#5}\else
   \ifnum\dispkind=\tw@
    \edef\@tempa{\noexpand\FFRAME{\LaTeXparams}}%
    \@tempa{#2}{#3}{#5}{#6}{#7}{#8}%
    \fi
   \fi
  \fi
  \ifwasdraft\@msidraft=1\else\@msidraft=0\fi{}%
  \egroup
 }%
%
% This macro added to let SW gobble a parameter that
% should not be passed on and expanded. 

\def\TEXUX#1{"texux"}

%
% Macros for text attributes:
%
%
%
%
%%%%%%%%%%%%%%%%%%%%%%%%%%%%%%%%%%%%%%%%%%%%%%%%%%%%%%%%%%%%%%%%%%%%%%%%
%
%
% macros for user - defined functions
%
%
% macro for unit names
%

%
% miscellaneous 
\long\def\QQQ#1#2{%
     \long\expandafter\def\csname#1\endcsname{#2}}%
\@ifundefined{QTP}{\def\QTP#1{}}{}
\@ifundefined{QEXCLUDE}{\def\QEXCLUDE#1{}}{}
\@ifundefined{Qlb}{}{}
\@ifundefined{Qlt}{}{}
\long\def\QQA#1#2{}%
\def\QTR#1#2{{\csname#1\endcsname {#2}}}%
  %	Add aliases for the ulem package

%
%
\def\EXPAND#1[#2]#3{}%
\def\NOEXPAND#1[#2]#3{}%
\def\LaTeXparent#1{}%
\def\ChildStyles#1{}%
\def\ChildDefaults#1{}%
\def\QTagDef#1#2#3{}%

% Constructs added with Scientific Notebook
\@ifundefined{correctchoice}{}{}
\@ifundefined{HTML}{\def\HTML#1{\relax}}{}
\@ifundefined{TCIIcon}{\def\TCIIcon#1#2#3#4{\relax}}{}
\if@compatibility
  \typeout{Not defining UNICODE  U or CustomNote commands for LaTeX 2.09.}
\else
  \providecommand{\UNICODE}[2][]{\protect\rule{.1in}{.1in}}
  \providecommand{\U}[1]{\protect\rule{.1in}{.1in}}
  
\fi

\@ifundefined{lambdabar}{
      
   }{}

%
% Macros for style editor docs
\@ifundefined{StyleEditBeginDoc}{}{}
%
% Macros for footnotes
\def\QQfnmark#1{\footnotemark}

%
% Macros for indexing.
%
\@ifundefined{TCIMAKEINDEX}{}{\makeindex}%
%
% Attempts to avoid problems with other styles
\@ifundefined{abstract}{%
 \def\abstract{%
  \if@twocolumn
   \section*{Abstract (Not appropriate in this style!)}%
   \else \small 
   \begin{center}{\bf Abstract\vspace{-.5em}\vspace{\z@}}\end{center}%
   \quotation 
   \fi
  }%
 }{%
 }%
\@ifundefined{endabstract}{\def\endabstract
  {\if@twocolumn\else\endquotation\fi}}{}%
\@ifundefined{maketitle}{\def\maketitle#1{}}{}%
\@ifundefined{affiliation}{\def\affiliation#1{}}{}%
\@ifundefined{proof}{}{}%
\@ifundefined{endproof}{}{}%
\@ifundefined{newfield}{\def\newfield#1#2{}}{}%
\@ifundefined{chapter}{\def\chapter#1{\par(Chapter head:)#1\par }%
 \newcount\c@chapter}{}%
\@ifundefined{part}{\def\part#1{\par(Part head:)#1\par }}{}%
\@ifundefined{section}{\def\section#1{\par(Section head:)#1\par }}{}%
\@ifundefined{subsection}{\def\subsection#1%
 {\par(Subsection head:)#1\par }}{}%
\@ifundefined{subsubsection}{\def\subsubsection#1%
 {\par(Subsubsection head:)#1\par }}{}%
\@ifundefined{paragraph}{\def\paragraph#1%
 {\par(Subsubsubsection head:)#1\par }}{}%
\@ifundefined{subparagraph}{\def\subparagraph#1%
 {\par(Subsubsubsubsection head:)#1\par }}{}%
%%%%%%%%%%%%%%%%%%%%%%%%%%%%%%%%%%%%%%%%%%%%%%%%%%%%%%%%%%%%%%%%%%%%%%%%
% These symbols are not recognized by LaTeX
\@ifundefined{therefore}{}{}%
\@ifundefined{backepsilon}{}{}%
\@ifundefined{yen}{}{}%
\@ifundefined{registered}{%
   \def\registered{\relax\ifmmode{}\r@gistered
                    \else$\m@th\r@gistered$\fi}%
 \def\r@gistered{^{\ooalign
  {\hfil\raise.07ex\hbox{$\scriptstyle\rm\text{R}$}\hfil\crcr
  \mathhexbox20D}}}}{}%
\@ifundefined{Eth}{}{}%
\@ifundefined{eth}{}{}%
\@ifundefined{Thorn}{}{}%
\@ifundefined{thorn}{}{}%
% A macro to allow any symbol that requires math to appear in text
%
\@ifundefined{degree}{}{}%
%
% macros for T3TeX files
\newdimen\theight
\@ifundefined{Column}{\def\Column{%
 \vadjust{\setbox\z@=\hbox{\scriptsize\quad\quad tcol}%
  \theight=\ht\z@\advance\theight by \dp\z@\advance\theight by \lineskip
  \kern -\theight \vbox to \theight{%
   \rightline{\rlap{\box\z@}}%
   \vss
   }%
  }%
 }}{}%
\@ifundefined{qed}{\def\qed{%
 \ifhmode\unskip\nobreak\fi\ifmmode\ifinner\else\hskip5\p@\fi\fi
 \hbox{\hskip5\p@\vrule width4\p@ height6\p@ depth1.5\p@\hskip\p@}%
 }}{}%
\@ifundefined{cents}{}{}%
\@ifundefined{tciLaplace}{}{}%
\@ifundefined{tciFourier}{}{}%
\@ifundefined{textcurrency}{}{}%
\@ifundefined{texteuro}{}{}%
\@ifundefined{euro}{}{}%
\@ifundefined{textfranc}{}{}%
\@ifundefined{textlira}{}{}%
\@ifundefined{textpeseta}{}{}%
\@ifundefined{miss}{\def\miss{\hbox{\vrule height2\p@ width 2\p@ depth\z@}}}{}%
\@ifundefined{vvert}{}{}%  %always translated to \left| or \right|
\@ifundefined{tcol}{\def\tcol#1{{\baselineskip=6\p@ \vcenter{#1}} \Column}}{}%
\@ifundefined{dB}{}{}%        %dummy entry in column 
\@ifundefined{mB}{}{}%   %column entry
\@ifundefined{nB}{}{}%     %column entry (not math)
\@ifundefined{note}{}{}%
\def\newfmtname{LaTeX2e}
% No longer load latexsym.  This is now handled by SWP, which uses amsfonts if necessary
%
\ifx\fmtname\newfmtname
  \DeclareOldFontCommand{\rm}{\normalfont\rmfamily}{\mathrm}
  \DeclareOldFontCommand{\sf}{\normalfont\sffamily}{\mathsf}
  \DeclareOldFontCommand{\tt}{\normalfont\ttfamily}{\mathtt}
  \DeclareOldFontCommand{\bf}{\normalfont\bfseries}{\mathbf}
  \DeclareOldFontCommand{\it}{\normalfont\itshape}{\mathit}
  \DeclareOldFontCommand{\sl}{\normalfont\slshape}{\@nomath\sl}
  \DeclareOldFontCommand{\sc}{\normalfont\scshape}{\@nomath\sc}
\fi

%
% Greek bold macros
% Redefine all of the math symbols 
% which might be bolded	 - there are 
% probably others to add to this list

\def\alpha{{\Greekmath 010B}}%
\def\beta{{\Greekmath 010C}}%
\def\gamma{{\Greekmath 010D}}%
\def\delta{{\Greekmath 010E}}%
\def\epsilon{{\Greekmath 010F}}%
\def\zeta{{\Greekmath 0110}}%
\def\eta{{\Greekmath 0111}}%
\def\theta{{\Greekmath 0112}}%
\def\iota{{\Greekmath 0113}}%
\def\kappa{{\Greekmath 0114}}%
\def\lambda{{\Greekmath 0115}}%
\def\mu{{\Greekmath 0116}}%
\def\nu{{\Greekmath 0117}}%
\def\xi{{\Greekmath 0118}}%
\def\pi{{\Greekmath 0119}}%
\def\rho{{\Greekmath 011A}}%
\def\sigma{{\Greekmath 011B}}%
\def\tau{{\Greekmath 011C}}%
\def\upsilon{{\Greekmath 011D}}%
\def\phi{{\Greekmath 011E}}%
\def\chi{{\Greekmath 011F}}%
\def\psi{{\Greekmath 0120}}%
\def\omega{{\Greekmath 0121}}%
\def\varepsilon{{\Greekmath 0122}}%
\def\vartheta{{\Greekmath 0123}}%
\def\varpi{{\Greekmath 0124}}%
\def\varrho{{\Greekmath 0125}}%
\def\varsigma{{\Greekmath 0126}}%
\def\varphi{{\Greekmath 0127}}%

\def\nabla{{\Greekmath 0272}}
\def\FindBoldGroup{%
   {\setbox0=\hbox{$\mathbf{x\global\edef\theboldgroup{\the\mathgroup}}$}}%
}

\def\Greekmath#1#2#3#4{%
    \if@compatibility
        \ifnum\mathgroup=\symbold
           \mathchoice{\mbox{\boldmath$\displaystyle\mathchar"#1#2#3#4$}}%
                      {\mbox{\boldmath$\textstyle\mathchar"#1#2#3#4$}}%
                      {\mbox{\boldmath$\scriptstyle\mathchar"#1#2#3#4$}}%
                      {\mbox{\boldmath$\scriptscriptstyle\mathchar"#1#2#3#4$}}%
        \else
           \mathchar"#1#2#3#4% 
        \fi 
    \else 
        \FindBoldGroup
        \ifnum\mathgroup=\theboldgroup % For 2e
           \mathchoice{\mbox{\boldmath$\displaystyle\mathchar"#1#2#3#4$}}%
                      {\mbox{\boldmath$\textstyle\mathchar"#1#2#3#4$}}%
                      {\mbox{\boldmath$\scriptstyle\mathchar"#1#2#3#4$}}%
                      {\mbox{\boldmath$\scriptscriptstyle\mathchar"#1#2#3#4$}}%
        \else
           \mathchar"#1#2#3#4% 
        \fi     	    
	  \fi}

\newif\ifGreekBold  \GreekBoldfalse
\let\SAVEPBF=\pbf
\def\pbf{\GreekBoldtrue\SAVEPBF}%

\@ifundefined{theorem}{\newtheorem{theorem}{Theorem}}{}
\@ifundefined{lemma}{}{}
\@ifundefined{corollary}{}{}
\@ifundefined{conjecture}{}{}
\@ifundefined{proposition}{}{}
\@ifundefined{axiom}{}{}
\@ifundefined{remark}{\newtheorem{remark}{Remark}}{}
\@ifundefined{example}{}{}
\@ifundefined{exercise}{}{}
\@ifundefined{definition}{}{}

\@ifundefined{mathletters}{%
  \newcounter{equationnumber}  
  \def\mathletters{%
     \addtocounter{equation}{1}
     \edef\@currentlabel{\theequation}%
     \setcounter{equationnumber}{\c@equation}
     \setcounter{equation}{0}%
     \edef\theequation{\@currentlabel\noexpand\alph{equation}}%
  }
  
}{}

%Logos
\@ifundefined{BibTeX}{%
    \def\BibTeX{{\rm B\kern-.05em{\sc i\kern-.025em b}\kern-.08em
                 T\kern-.1667em\lower.7ex\hbox{E}\kern-.125emX}}}{}%
\@ifundefined{AmS}%
    {\def\AmS{{\protect\usefont{OMS}{cmsy}{m}{n}%
                A\kern-.1667em\lower.5ex\hbox{M}\kern-.125emS}}}{}%
\@ifundefined{AmSTeX}{}{}%
%

% This macro is a fix to eqnarray
\def\@@eqncr{\let\@tempa\relax
    \ifcase\@eqcnt \def\@tempa{& & &}\or \def\@tempa{& &}%
      \else \def\@tempa{&}\fi
     \@tempa
     \if@eqnsw
        \iftag@
           \@taggnum
        \else
           \@eqnnum\stepcounter{equation}%
        \fi
     \fi
     \global\tag@false
     \global\@eqnswtrue
     \global\@eqcnt\z@\cr}

\def\TCItag{\@ifnextchar*{\@TCItagstar}{\@TCItag}}
\def\@TCItag#1{%
    \global\tag@true
    \global\def\@taggnum{(#1)}%
    \global\def\@currentlabel{#1}}
\def\@TCItagstar*#1{%
    \global\tag@true
    \global\def\@taggnum{#1}%
    \global\def\@currentlabel{#1}}
%
%%%%%%%%%%%%%%%%%%%%%%%%%%%%%%%%%%%%%%%%%%%%%%%%%%%%%%%%%%%%%%%%%%%%%
%
%
%
%
%
%
%
%
%
%
%
%
%
%
%
%
%
% Macros for text size operators:
%

\def\tint{\msi@int\textstyle\int}%
\def\tiint{\msi@int\textstyle\iint}%
\def\tiiint{\msi@int\textstyle\iiint}%
\def\tiiiint{\msi@int\textstyle\iiiint}%
\def\tidotsint{\msi@int\textstyle\idotsint}%
\def\toint{\msi@int\textstyle\oint}%

%
%
%
%
%
%
%
%
%
%
%
%
%
%
%Macros for display size operators:
%

\newtoks\temptoksa
\newtoks\temptoksb
\newtoks\temptoksc

\def\msi@int#1#2{%
 \def\@temp{{#1#2\the\temptoksc_{\the\temptoksa}^{\the\temptoksb}}}%   
 \futurelet\@nextcs
 \@int
}

\def\@int{%
   \ifx\@nextcs\limits
      \typeout{Found limits}%
      \temptoksc={\limits}%
	  \let\@next\@intgobble%
   \else\ifx\@nextcs\nolimits
      \typeout{Found nolimits}%
      \temptoksc={\nolimits}%
	  \let\@next\@intgobble%
   \else
      \typeout{Did not find limits or no limits}%
      \temptoksc={}%
      \let\@next\msi@limits%
   \fi\fi
   \@next   
}%

\def\@intgobble#1{%
   \typeout{arg is #1}%
   \msi@limits
}

\def\msi@limits{%
   \temptoksa={}%
   \temptoksb={}%
   \@ifnextchar_{\@limitsa}{\@limitsb}%
}

\def\@limitsa_#1{%
   \temptoksa={#1}%
   \@ifnextchar^{\@limitsc}{\@temp}%
}

\def\@limitsb{%
   \@ifnextchar^{\@limitsc}{\@temp}%
}

\def\@limitsc^#1{%
   \temptoksb={#1}%
   \@ifnextchar_{\@limitsd}{\@temp}%   
}

\def\@limitsd_#1{%
   \temptoksa={#1}%
   \@temp
}

\def\dint{\msi@int\displaystyle\int}%
\def\diint{\msi@int\displaystyle\iint}%
\def\diiint{\msi@int\displaystyle\iiint}%
\def\diiiint{\msi@int\displaystyle\iiiint}%
\def\didotsint{\msi@int\displaystyle\idotsint}%
\def\doint{\msi@int\displaystyle\oint}%

\if@compatibility\else
  % Always load amsmath in LaTeX2e mode
  \RequirePackage{amsmath}
\fi

\def\ExitTCILatex{\makeatother }

\bgroup
\ifx\ds@amstex\relax
   \message{amstex already loaded}\aftergroup\ExitTCILatex
\else
   \@ifpackageloaded{amsmath}%
      {\if@compatibility\message{amsmath already loaded}\fi\aftergroup\ExitTCILatex}
      {}
   \@ifpackageloaded{amstex}%
      {\if@compatibility\message{amstex already loaded}\fi\aftergroup\ExitTCILatex}
      {}
   \@ifpackageloaded{amsgen}%
      {\if@compatibility\message{amsgen already loaded}\fi\aftergroup\ExitTCILatex}
      {}
\fi
\egroup

%Exit if any of the AMS macros are already loaded.
%This is always the case for LaTeX2e mode.

%%%%%%%%%%%%%%%%%%%%%%%%%%%%%%%%%%%%%%%%%%%%%%%%%%%%%%%%%%%%%%%%%%%%%%%%%%
% NOTE: The rest of this file is read only if in LaTeX 2.09 compatibility
% mode. This section is used to define AMS-like constructs in the
% event they have not been defined.
%%%%%%%%%%%%%%%%%%%%%%%%%%%%%%%%%%%%%%%%%%%%%%%%%%%%%%%%%%%%%%%%%%%%%%%%%%
\typeout{TCILATEX defining AMS-like constructs in LaTeX 2.09 COMPATIBILITY MODE}
%%%%%%%%%%%%%%%%%%%%%%%%%%%%%%%%%%%%%%%%%%%%%%%%%%%%%%%%%%%%%%%%%%%%%%%%
%  Macros to define some AMS LaTeX constructs when 
%  AMS LaTeX has not been loaded
% 
% These macros are copied from the AMS-TeX package for doing
% multiple integrals.
%
\let\DOTSI\relax
\def\RIfM@{\relax\ifmmode}%
\def\FN@{\futurelet\next}%
\newcount\intno@
\def\iint{\DOTSI\intno@\tw@\FN@\ints@}%
\def\iiint{\DOTSI\intno@\thr@@\FN@\ints@}%
\def\iiiint{\DOTSI\intno@4 \FN@\ints@}%
\def\idotsint{\DOTSI\intno@\z@\FN@\ints@}%
\def\ints@{\findlimits@\ints@@}%
\newif\iflimtoken@
\newif\iflimits@
\def\findlimits@{\limtoken@true\ifx\next\limits\limits@true
 \else\ifx\next\nolimits\limits@false\else
 \limtoken@false\ifx\ilimits@\nolimits\limits@false\else
 \ifinner\limits@false\else\limits@true\fi\fi\fi\fi}%
\def\multint@{\int\ifnum\intno@=\z@\intdots@                          %1
 \else\intkern@\fi                                                    %2
 \ifnum\intno@>\tw@\int\intkern@\fi                                   %3
 \ifnum\intno@>\thr@@\int\intkern@\fi                                 %4
 \int}%                                                               %5
\def\multintlimits@{\intop\ifnum\intno@=\z@\intdots@\else\intkern@\fi
 \ifnum\intno@>\tw@\intop\intkern@\fi
 \ifnum\intno@>\thr@@\intop\intkern@\fi\intop}%
\def\intic@{%
    \mathchoice{\hskip.5em}{\hskip.4em}{\hskip.4em}{\hskip.4em}}%
\def\negintic@{\mathchoice
 {\hskip-.5em}{\hskip-.4em}{\hskip-.4em}{\hskip-.4em}}%
\def\ints@@{\iflimtoken@                                              %1
 \def\ints@@@{\iflimits@\negintic@
   \mathop{\intic@\multintlimits@}\limits                             %2
  \else\multint@\nolimits\fi                                          %3
  \eat@}%                                                             %4
 \else                                                                %5
 \def\ints@@@{\iflimits@\negintic@
  \mathop{\intic@\multintlimits@}\limits\else
  \multint@\nolimits\fi}\fi\ints@@@}%
\def\intkern@{\mathchoice{\!\!\!}{\!\!}{\!\!}{\!\!}}%
\def\plaincdots@{\mathinner{\cdotp\cdotp\cdotp}}%
\def\intdots@{\mathchoice{\plaincdots@}%
 {{\cdotp}\mkern1.5mu{\cdotp}\mkern1.5mu{\cdotp}}%
 {{\cdotp}\mkern1mu{\cdotp}\mkern1mu{\cdotp}}%
 {{\cdotp}\mkern1mu{\cdotp}\mkern1mu{\cdotp}}}%
%
%
%  These macros are for doing the AMS \text{} construct
%
\def\RIfM@{\relax\protect\ifmmode}
\def\text{\RIfM@\expandafter\text@\else\expandafter\mbox\fi}
\let\nfss@text\text
\def\text@#1{\mathchoice
   {\textdef@\displaystyle\f@size{#1}}%
   {\textdef@\textstyle\tf@size{\firstchoice@false #1}}%
   {\textdef@\textstyle\sf@size{\firstchoice@false #1}}%
   {\textdef@\textstyle \ssf@size{\firstchoice@false #1}}%
   \glb@settings}

\def\textdef@#1#2#3{\hbox{{%
                    \everymath{#1}%
                    \let\f@size#2\selectfont
                    #3}}}
\newif\iffirstchoice@
\firstchoice@true
%
%These are the AMS constructs for multiline limits.
%
\def\Let@{\relax\iffalse{\fi\let\\=\cr\iffalse}\fi}%
\def\vspace@{\def\vspace##1{\crcr\noalign{\vskip##1\relax}}}%
\def\multilimits@{\bgroup\vspace@\Let@
 \baselineskip\fontdimen10 \scriptfont\tw@
 \advance\baselineskip\fontdimen12 \scriptfont\tw@
 \lineskip\thr@@\fontdimen8 \scriptfont\thr@@
 \lineskiplimit\lineskip
 \vbox\bgroup\ialign\bgroup\hfil$\m@th\scriptstyle{##}$\hfil\crcr}%
\def\Sb{_\multilimits@}%
\def\endSb{\crcr\egroup\egroup\egroup}%
\def\Sp{^\multilimits@}%

%
%
%These are AMS constructs for horizontal arrows
%
\newdimen\ex@
\ex@.2326ex
\def\rightarrowfill@#1{$#1\m@th\mathord-\mkern-6mu\cleaders
 \hbox{$#1\mkern-2mu\mathord-\mkern-2mu$}\hfill
 \mkern-6mu\mathord\rightarrow$}%
\def\leftarrowfill@#1{$#1\m@th\mathord\leftarrow\mkern-6mu\cleaders
 \hbox{$#1\mkern-2mu\mathord-\mkern-2mu$}\hfill\mkern-6mu\mathord-$}%
\def\leftrightarrowfill@#1{$#1\m@th\mathord\leftarrow
\mkern-6mu\cleaders
 \hbox{$#1\mkern-2mu\mathord-\mkern-2mu$}\hfill
 \mkern-6mu\mathord\rightarrow$}%
\def\overrightarrow{\mathpalette\overrightarrow@}%
\def\overrightarrow@#1#2{\vbox{\ialign{##\crcr\rightarrowfill@#1\crcr
 \noalign{\kern-\ex@\nointerlineskip}$\m@th\hfil#1#2\hfil$\crcr}}}%

\def\overleftarrow{\mathpalette\overleftarrow@}%
\def\overleftarrow@#1#2{\vbox{\ialign{##\crcr\leftarrowfill@#1\crcr
 \noalign{\kern-\ex@\nointerlineskip}$\m@th\hfil#1#2\hfil$\crcr}}}%
\def\overleftrightarrow{\mathpalette\overleftrightarrow@}%
\def\overleftrightarrow@#1#2{\vbox{\ialign{##\crcr
   \leftrightarrowfill@#1\crcr
 \noalign{\kern-\ex@\nointerlineskip}$\m@th\hfil#1#2\hfil$\crcr}}}%
\def\underrightarrow{\mathpalette\underrightarrow@}%
\def\underrightarrow@#1#2{\vtop{\ialign{##\crcr$\m@th\hfil#1#2\hfil
  $\crcr\noalign{\nointerlineskip}\rightarrowfill@#1\crcr}}}%

\def\underleftarrow{\mathpalette\underleftarrow@}%
\def\underleftarrow@#1#2{\vtop{\ialign{##\crcr$\m@th\hfil#1#2\hfil
  $\crcr\noalign{\nointerlineskip}\leftarrowfill@#1\crcr}}}%
\def\underleftrightarrow{\mathpalette\underleftrightarrow@}%
\def\underleftrightarrow@#1#2{\vtop{\ialign{##\crcr$\m@th
  \hfil#1#2\hfil$\crcr
 \noalign{\nointerlineskip}\leftrightarrowfill@#1\crcr}}}%
%%%%%%%%%%%%%%%%%%%%%

\def\qopnamewl@#1{\mathop{\operator@font#1}\nlimits@}
\let\nlimits@\displaylimits
\def\setboxz@h{\setbox\z@\hbox}

\def\varlim@#1#2{\mathop{\vtop{\ialign{##\crcr
 \hfil$#1\m@th\operator@font lim$\hfil\crcr
 \noalign{\nointerlineskip}#2#1\crcr
 \noalign{\nointerlineskip\kern-\ex@}\crcr}}}}

 \def\rightarrowfill@#1{\m@th\setboxz@h{$#1-$}\ht\z@\z@
  $#1\copy\z@\mkern-6mu\cleaders
  \hbox{$#1\mkern-2mu\box\z@\mkern-2mu$}\hfill
  \mkern-6mu\mathord\rightarrow$}
\def\leftarrowfill@#1{\m@th\setboxz@h{$#1-$}\ht\z@\z@
  $#1\mathord\leftarrow\mkern-6mu\cleaders
  \hbox{$#1\mkern-2mu\copy\z@\mkern-2mu$}\hfill
  \mkern-6mu\box\z@$}

\def\projlim{\qopnamewl@{proj\,lim}}
\def\injlim{\qopnamewl@{inj\,lim}}
\def\varinjlim{\mathpalette\varlim@\rightarrowfill@}
\def\varprojlim{\mathpalette\varlim@\leftarrowfill@}
\def\varliminf{\mathpalette\varliminf@{}}
\def\varliminf@#1{\mathop{\underline{\vrule\@depth.2\ex@\@width\z@
   \hbox{$#1\m@th\operator@font lim$}}}}
\def\varlimsup{\mathpalette\varlimsup@{}}
\def\varlimsup@#1{\mathop{\overline
  {\hbox{$#1\m@th\operator@font lim$}}}}

%
%Companion to stackrel
%
%
%
% These are AMS environments that will be defined to
% be verbatims if amstex has not actually been 
% loaded
%
%
\begingroup \catcode `|=0 \catcode `[= 1
\catcode`]=2 \catcode `\{=12 \catcode `\}=12
\catcode`\\=12 
|gdef|@alignverbatim#1\end{align}[#1|end[align]]
|gdef|@salignverbatim#1\end{align*}[#1|end[align*]]

|gdef|@alignatverbatim#1\end{alignat}[#1|end[alignat]]
|gdef|@salignatverbatim#1\end{alignat*}[#1|end[alignat*]]

|gdef|@xalignatverbatim#1\end{xalignat}[#1|end[xalignat]]
|gdef|@sxalignatverbatim#1\end{xalignat*}[#1|end[xalignat*]]

|gdef|@gatherverbatim#1\end{gather}[#1|end[gather]]
|gdef|@sgatherverbatim#1\end{gather*}[#1|end[gather*]]

|gdef|@gatherverbatim#1\end{gather}[#1|end[gather]]
|gdef|@sgatherverbatim#1\end{gather*}[#1|end[gather*]]

|gdef|@multilineverbatim#1\end{multiline}[#1|end[multiline]]
|gdef|@smultilineverbatim#1\end{multiline*}[#1|end[multiline*]]

|gdef|@arraxverbatim#1\end{arrax}[#1|end[arrax]]
|gdef|@sarraxverbatim#1\end{arrax*}[#1|end[arrax*]]

|gdef|@tabulaxverbatim#1\end{tabulax}[#1|end[tabulax]]
|gdef|@stabulaxverbatim#1\end{tabulax*}[#1|end[tabulax*]]

|endgroup

\def\align{\@verbatim \frenchspacing\@vobeyspaces \@alignverbatim
You are using the "align" environment in a style in which it is not defined.}

\@namedef{align*}{\@verbatim\@salignverbatim
You are using the "align*" environment in a style in which it is not defined.}
\expandafter\let\csname endalign*\endcsname =\endtrivlist

\def\alignat{\@verbatim \frenchspacing\@vobeyspaces \@alignatverbatim
You are using the "alignat" environment in a style in which it is not defined.}

\@namedef{alignat*}{\@verbatim\@salignatverbatim
You are using the "alignat*" environment in a style in which it is not defined.}
\expandafter\let\csname endalignat*\endcsname =\endtrivlist

\def\xalignat{\@verbatim \frenchspacing\@vobeyspaces \@xalignatverbatim
You are using the "xalignat" environment in a style in which it is not defined.}

\@namedef{xalignat*}{\@verbatim\@sxalignatverbatim
You are using the "xalignat*" environment in a style in which it is not defined.}
\expandafter\let\csname endxalignat*\endcsname =\endtrivlist

\def\gather{\@verbatim \frenchspacing\@vobeyspaces \@gatherverbatim
You are using the "gather" environment in a style in which it is not defined.}

\@namedef{gather*}{\@verbatim\@sgatherverbatim
You are using the "gather*" environment in a style in which it is not defined.}
\expandafter\let\csname endgather*\endcsname =\endtrivlist

\def\multiline{\@verbatim \frenchspacing\@vobeyspaces \@multilineverbatim
You are using the "multiline" environment in a style in which it is not defined.}

\@namedef{multiline*}{\@verbatim\@smultilineverbatim
You are using the "multiline*" environment in a style in which it is not defined.}
\expandafter\let\csname endmultiline*\endcsname =\endtrivlist

\def\arrax{\@verbatim \frenchspacing\@vobeyspaces \@arraxverbatim
You are using a type of "array" construct that is only allowed in AmS-LaTeX.}

\def\tabulax{\@verbatim \frenchspacing\@vobeyspaces \@tabulaxverbatim
You are using a type of "tabular" construct that is only allowed in AmS-LaTeX.}

\@namedef{arrax*}{\@verbatim\@sarraxverbatim
You are using a type of "array*" construct that is only allowed in AmS-LaTeX.}
\expandafter\let\csname endarrax*\endcsname =\endtrivlist

\@namedef{tabulax*}{\@verbatim\@stabulaxverbatim
You are using a type of "tabular*" construct that is only allowed in AmS-LaTeX.}
\expandafter\let\csname endtabulax*\endcsname =\endtrivlist

% macro to simulate ams tag construct

% This macro is a fix to the equation environment
 \def\endequation{%
     \ifmmode\ifinner % FLEQN hack
      \iftag@
        \addtocounter{equation}{-1} % undo the increment made in the begin part
        $\hfil
           \displaywidth\linewidth\@taggnum\egroup \endtrivlist
        \global\tag@false
        \global\@ignoretrue   
      \else
        $\hfil
           \displaywidth\linewidth\@eqnnum\egroup \endtrivlist
        \global\tag@false
        \global\@ignoretrue 
      \fi
     \else   
      \iftag@
        \addtocounter{equation}{-1} % undo the increment made in the begin part
        \eqno \hbox{\@taggnum}
        \global\tag@false%
        $$\global\@ignoretrue
      \else
        \eqno \hbox{\@eqnnum}% $$ BRACE MATCHING HACK
        $$\global\@ignoretrue
      \fi
     \fi\fi
 } 

 \newif\iftag@ \tag@false
 
 \def\TCItag{\@ifnextchar*{\@TCItagstar}{\@TCItag}}
 \def\@TCItag#1{%
     \global\tag@true
     \global\def\@taggnum{(#1)}%
     \global\def\@currentlabel{#1}}
 \def\@TCItagstar*#1{%
     \global\tag@true
     \global\def\@taggnum{#1}%
     \global\def\@currentlabel{#1}}

  \@ifundefined{tag}{
     \def\tag{\@ifnextchar*{\@tagstar}{\@tag}}
     \def\@tag#1{%
         \global\tag@true
         \global\def\@taggnum{(#1)}}
     \def\@tagstar*#1{%
         \global\tag@true
         \global\def\@taggnum{#1}}
  }{}

%
%
%
%
%

% Do not add anything to the end of this file.  
% The last section of the file is loaded only if 
% amstex has not been.
\makeatother

\bibliographystyle{elsarticle-num}

\begin{document}

\begin{frontmatter}

\title{Multi-discontinuous Functional based Sliding Mode Cascade Observer for Estimation and Closed-loop Compensation Controller}
%\tnotetext[mytitlenote]{Fully documented templates are available in the elsarticle package on \href{http://www.ctan.org/tex-archive/macros/latex/contrib/elsarticle}{CTAN}.}

\author[Firstaddress]{Yiyong Sun}
%\ead{sunyy@bit.edu.cn,yiyonghit@gmail.com}
%
\author[Secondaddress]{Zhang Chen}

\author[Firstaddress]{Guang Zhai\corref{mycorrespondingauthor}}
\ead{gzhai@bit.edu.cn}
\cortext[mycorrespondingauthor]{Corresponding author}

\author[Secondaddress]{Bin Liang}

\address[Firstaddress]{School of Aerospace Engineering, Beijing Institute of Technology, Beijing, 100081, China}
\address[Secondaddress]{Department of Automation, Tsinghua University, Beijing, 100084, China}
%\address[mysecondaryaddress]{the Chair
%of Automatic Control Engineering (LSR), Technische Universit\"{a}t M\"{u}%
%nchen, Theresienstr. 90, 80333 M\"{u}nchen, Germany}

\begin{abstract}
The sliding mode observer is a useful method for estimating the system state and the unknown disturbance.
However, the traditional single-layer observer might still suffer from high pulse when the output measurement is mixed with noise.
To improve the estimation quality, a new cascade sliding mode observer containing multiple discontinuous functions is proposed in this letter.
The proposed observer consists of two layers: the first layer is a traditional sliding mode observer, and the second layer is a cascade observer.
The measurement noise issue is considered in the source system model.
An alternative method how to design the observer gains of the two layers, together with how to examine the effectiveness of the compensator based closed-loop system, are offered.
A numerical example is provided to demonstrate the effectiveness of the proposed method.
The observation structure proposed in this letter not only smooths the estimated state but also reduces the control consumption.
\end{abstract}

\begin{keyword}
Sliding mode; Cascade observer; Multiple discontinuous functions;  Disturbance estimation; Compensation controller.
\end{keyword}

\end{frontmatter}

%\linenumbers

%%%%%%%%%%%%%%%%%%%%%%%%%%%%%%%%%%%%%%%%%%%%%%%%%%%%%%%%%%%%%%%%%%%%%%%%%%%%
\section{Introduction} \label{sec:introduction}

Together with the development of variable structure control and nonlinear discontinuous control theory, the sliding mode observer (SMO) technique draws attention from researchers and engineers due to its adaptivity, disturbance estimation and compensation ability for linear and nonlinear systems, especially after the new century \cite{drakunov1995sliding, spurgeon2008sliding, shtessel2014sliding, zhang2023disturbance}.
The SMO technique was introduced as early as in 1980s \cite{slotine1987sliding,walcott1987state} and then applied for fault detection and isolation \cite{edwards2000sliding,edwards2000slidingEJC}, actuator and sensor fault reconstruction and detection considering system uncertainty \cite{tan2003sliding}.
In the last decade, the SMO technique has been further developed and broadly used.
For example, the sliding mode observer method is applied for the predictive current control for permanent magnet synchronous motor drive systems in \cite{li2023improved},
using the descriptor augment remodelling method, the SMO is expanded to the fault tolerant control of nonlinear systems  \cite{gao2019fault}, fault reconstruction, sensor and actuator fault estimation of stochastic switching and hybrid systems \cite{liu2018fault,yin2017descriptor,yang2018descriptor}.

Most existing research on SMO-based feed-forward compensation controllers uses a single observation layer, which results in noisy disturbance and system state estimation, and causes actuator vibration due to measurement noises and switching function.
Various methods have been proposed to improve the traditional SMO and avoid actuator vibration.
These methods include using high order SMO \cite{shtessel2014sliding}, introducing the optimized switching function \cite{kyslan2022comparative}, or combining SMO with other filters \cite{sun2017coupled}.
The cascade high gain observer technique, which was developed in recent years \cite{khalil2017cascade}, shows potential against peaking signals and model uncertainties.
The cascade sliding mode observer is employed on the torque-sensorless control of permanent-magnet synchronous machines \cite{zhao2017accurate}, but the measurement noise is not fully considered.

In this letter, a two-layer SMO containing SMO and cascade observer, is provided to further smooth the estimated state and disturbance.
The source system model considered in this letter takes into account both measurement noise and lumped disturbance.
The original system model is then transferred into a new descriptor one via the system state augmentation technique.
The existing single layer SMO and compensation controller based on it are reviewed.
Then, the two-layer observer, i.e. SMO-CO, whose first layer is the traditional SMO and the second layer is the cascade one, is proposed.
An alternative method for selecting the observe gains of the SMO-CO, and the sufficient condition for examining the effectiveness of the closed-loop system, are presented.

The main contribution of this letter is as follows:
\begin{itemize}
  \item It proposes a new SMO-CO based observer scheme, which can further smooth the estimated disturbance and system state.
         Unlike existing research on SMO, which only has one discontinuous function, the observer proposed here has multiple ones.
  \item It presents an alternative method for designing the gains of the two-layer observer.
        It also provides a sufficient condition for evaluating the closed-loop system with an observer based compensation controller.
  \item It shows that, compared with the conventional single layer SMO, the SMO-CO scheme proposed in this letter has lower observation error and less control consumption.
\end{itemize}

The rest of this letter is organized as follows.
In Section. \ref{sec:Preliminary}, the conventional SMO based compensator is firstly introduced.
The two-layer SMO-CO control scheme, methods on designing the two observer gains, designing the discontinuous functions, and the sufficient condition on examining the closed-loop system are presented in Section. \ref{sec:MainResult}.
To examine the validity of the proposed SMO-CO based compensation control scheme, a numerical example is offered in Section.\ref{sec:Example}.
Section.\ref{sec:Conclusion} concludes the work of the whole letter.

%%%%%%%%%%%%%%%%%%%%%%%%%%%%%%%%%%%%%%%%%%%%%%%%%%%%%%%%%%%%%%%%%%%%%%%%%%%%

\section{Preliminary} \label{sec:Preliminary}
\subsection{System Description}
Consider the following system  \eqref{Equ_SMO_Original_System}  with unknown lumped system disturbance $d(t)$ and measurement noise $\omega(t)$
\begin{equation}
\left\{
\begin{array}{rcl}
\dot{x}(t) &=& A x(t) + Bu(t)+B_f d(t)  \\
y(t) &=& Cx(t) + C_\omega \omega(t)
\end{array}
 \right.
 \label{Equ_SMO_Original_System}
\end{equation}
where $A\in \mathcal{R}^{n\times n}$, $B\in \mathcal{R}^{n\times m}$, $B_f = B \Lambda\in \mathcal{R}^{n\times m} $, $C\in \mathcal{R}^{p\times n}$  are the system parameters.
$\Lambda \in \mathcal{R}^{m\times m}$ is nonsingular.
$x(t)\in \mathcal{R}^{n}$, $u(t)\in \mathcal{R}^{m}$, $d(t)\in \mathcal{R}^{m}$, and $y(t)\in \mathcal{R}^{p}$ indicate the system state, control input, lumped disturbance and measurement output vectors respectively.
$C_\omega  \in \mathcal{R}^{p\times p} \geq 0$ is the coefficient matrix for the standard unit Gaussian noise $\omega(t)\sim (0, 1)\in \mathcal{R}^{p}$.
$\omega(t)\in [-\overline{\omega},\overline{\omega} ]$, $\overline{\omega}$ is the maximal amplitude and set to be $1$ for unit Gaussian noise signal.
To be concise, in the following description, the vectors and time-constant matrices are expressed into brief forms without time $t$. Pairs $(A,B)$ and $(A,C)$ are controllable and observable.

\textbf{Objective:}
The objective of this paper is to design a new observer scheme for estimating the unknown disturbance $d(t)$ and system state, and smoothing the system state with measured noise.

\begin{assumption} \label{Asmp_Upper_Bdr}
The lumped disturb $d(t)$ in \eqref{Equ_SMO_Original_System} might consist of the un-modelled system uncertainties, unknown external perturbation like friction, artificial interrupt. It is assumed to be amplitude limited and  Lipschitz \cite{gao2019fault,yu2017fault}, which means
\begin{equation}
\| d(t) \| \leq \overline{\mathbf{d}},  \| \dot{d} (t) \| \leq \overline{\mathbf{h}}
\label{Equ_Assumption_1_bdr}
\end{equation}
where $\overline{\mathbf{d}}$ and $\overline{\mathbf{h}}$ are the upper boundaries of the lumped disturbance $d(t)$ and its derivative $\dot{d}(t)$.
\end{assumption}

\subsection{Descriptor Augment Model}
Using the descriptor augment technique in \cite{gao2019fault,yu2017fault,yin2017descriptor,gao2007actuator,pang2020design,yang2018descriptor,chen2019fault}, the system \eqref{Equ_SMO_Original_System} is augmented into equal form below
\begin{equation}
\left\{
\begin{array}{rcl}
\bar{E} \dot{\bar{x}} &=& \bar{A} \bar{x} + \bar{B}u + \bar{B}_f \bar{d} \\
y &=& \bar{C}\bar{x} + C_\omega  \omega
\end{array}
 \right.
 \label{Equ_SMO_Augment_System}
\end{equation}
where $ \bar{x}(t)  =  [x^T(t) \ d^T(t)]^T$,$\bar{C}   =  \left[ C \ 0_{p\times m } \right] , \Phi \geq 0$,
\begin{eqnarray*}
&\bar{E}& = \left[ \begin{array}{cc} I & B_f\Phi^{-1} \\ 0 & I \end{array}  \right],
\bar{A} = \left[ \begin{array}{cc} A & 0 \\ 0 & -\Phi \end{array}  \right], \\
&\bar{B}&  =  \left[ \begin{array}{c} B \\ 0 \end{array} \right],
\bar{B}_f = \left[ \begin{array}{c} B_f \\ \Phi \end{array} \right],
\bar{d}  = \left( d+\Phi^{-1} \dot{d}  \right).
\end{eqnarray*}

\begin{remark}
The matrix $\bar{E} $ is nonsingular that, one can multiply the left and right sides of the differential equation in \eqref{Equ_SMO_Augment_System} to obtain a normal dynamics.
But it's necessary on designing the SMO in the following, one can design observer and analyze its effective via the descriptor form in \eqref{Equ_SMO_Augment_System} directly.
And, such a treatment is potential when the sensor fault is considered in the system.
\end{remark}
\begin{remark}
The matrix $\Phi$ would be selected accordingly.
The effectiveness of $\Phi$ is it can modulate the gain $\bar{B}_f$.
When $\Phi$ is set to be identity matrix, $\bar{E}$ becoming to be identity, \eqref{Equ_SMO_Augment_System} is in the form as in \cite{yin2017descriptor,gao2019fault,yang2020neural} without considering sensor faults.
\end{remark}

\subsection{Augmented Sliding Mode Observer}
With \eqref{Equ_SMO_Augment_System}, one has the augmented state based sliding mode observer below
\begin{equation}
\left\{
\begin{array}{l}
\bar{E} \dot{\hat{\bar{x}}}  =  \bar{A} \hat{\bar{x}} + \bar{B}u + \bar{L}(y-\bar{C} \hat{\bar{x}})+\bar{B}_f u_{s1}(t)- \bar{L} C_{\omega} u_{s2}(t) \\
\hat{y}  =  \bar{C}\hat{\bar{x}}
\end{array}
\right.
\label{Equ_Aug_SMO}
\end{equation}

With the augmented descriptor system \eqref{Equ_SMO_Augment_System} and the observer system \eqref{Equ_Aug_SMO}, one defines the augmented observer error $\bar{e}(t)= \bar{x}(t) -\hat{\bar{x}}(t) = \left[e^T(t) e_d^T(t)\right]^T$, where $e(t)= x(t) -\hat{x}(t)$ is the system state observation error, and $e_d(t) = d(t)-\hat{d}(t)$ indicates the disturbance observation error. The observation error dynamics is
\begin{equation}
\left\{
\begin{array}{rcl}
\bar{E} \dot{\bar{e}}  &=& \left(\bar{A}  - \bar{L}\bar{C}\right)\bar{e} +\bar{L} {C}_{\omega} (u_{s2}(t)-\omega) + \bar{B}_f(\bar{d}  - u_{s1}(t)) \\
e_{y}  &=& \bar{C}\bar{e} + C_{\omega}  \omega(t)
\end{array}
\right.
\label{Equ_ObserverError}
\end{equation}
If one can design the gain $\bar{L}$, $u_{s1}(t)$ and $u_{s2}(t)$ that $\bar{e}(t)$ approaches to be zero, i.e. $\lim_{t\rightarrow \infty} \bar{e}(t)\rightarrow 0$ and \eqref{Equ_ObserverError} is stable, the effectiveness of \eqref{Equ_Aug_SMO} can be guaranteed.

The discontinuous functions $u_{s1}(t)$ and $u_{s2}(t)$ are designed
\begin{equation}
\begin{array}{rcl}
u_{s1}(t) &=&  (\overline{\mathbf{d}}+ \overline{\mathbf{h}} \Phi^{-1}+\eta)  \textbf{sgn}\left(s_1(t) \right) \\
u_{s2}(t) &=&  -\overline{\mathbf{\omega}}  \textbf{sgn}\left(s_2(t) \right)
\end{array}
\label{Equ_Discontinus_Function}
\end{equation}
where
$\overline{\mathbf{d}}$ and $ \overline{\mathbf{h}}$ are assumed as in \eqref{Equ_Assumption_1_bdr} of \textbf{Assumption \ref{Asmp_Upper_Bdr}}.
The parameter $\eta>0 $ is to be selected properly. The sign functions of $u_{s1}(t)$ and $u_{s2}(t)$ are selected as
\begin{equation}
s_1(t) = H_1 \bar{C} \bar{e}(t), s_2(t) = H_2 \bar{C} \bar{e}(t).
\end{equation}
Matrices $H_1$ and $H_2$ are selected to satisfy below constraints
\begin{equation}
(H_1\bar{C})^T = \bar{P}\bar{E}^{-1} \bar{B}_f, (H_2\bar{C})^T = \bar{P}\bar{E}^{-1}\bar{L}C_{\omega}
\label{Equ_HC_Constraint}
\end{equation}
where $\bar{P}$ and $\bar{L}$ are the non-negative matrix determined in the following theorem and the observer gain. How to select the gains $H_1$ and $H_2$ are presented in Section.\ref{Subsec:CAS}.

\begin{theorem} \label{Theo_SMO}
With the discontinuous functions $u_{s1}(t)$ and $u_{s2}(t)$, the observer error system \eqref{Equ_ObserverError} is stable, i.e. \eqref{Equ_Aug_SMO} is effective, when there exist positive matrices $\bar{P} $, and matrix $\bar{N}$ with appropriate dimension that
\begin{equation}
0 > \bar{P} \bar{E}^{-1} \bar{A}+ \bar{A}^T\bar{E}^{-T}\bar{P}- \bar{N}\bar{C}-\bar{C}^T\bar{N}^T
\label{Equ_H2_Rqr}
\end{equation}
and the observe gain $\bar{L}$ is designed as $\bar{L} =  \bar{E}\bar{P}^{-1} \bar{N} $.
\end{theorem}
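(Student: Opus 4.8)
The plan is to certify asymptotic stability of the error dynamics \eqref{Equ_ObserverError} by a single quadratic Lyapunov function, arranging (through the coupling conditions \eqref{Equ_HC_Constraint}) that the discontinuous injections $u_{s1},u_{s2}$ contribute only sign-definite terms to $\dot V$. Since $\bar E$ is nonsingular, I would first rewrite \eqref{Equ_ObserverError} in normal form, $\dot{\bar e}=\bar E^{-1}\left[\,(\bar A-\bar L\bar C)\bar e+\bar L C_{\omega}(u_{s2}-\omega)+\bar B_f(\bar d-u_{s1})\,\right]$, and take $V(\bar e)=\bar e^{T}\bar P\bar e$ with $\bar P>0$ as in \eqref{Equ_H2_Rqr}; this $V$ is positive definite and radially unbounded.

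Differentiating $V$ along \eqref{Equ_ObserverError} produces three groups of terms. In the drift group, substituting $\bar L=\bar E\bar P^{-1}\bar N$ (equivalently $\bar N=\bar P\bar E^{-1}\bar L$) turns $\bar e^{T}\left(\bar P\bar E^{-1}(\bar A-\bar L\bar C)+(\,\cdot\,)^{T}\right)\bar e$ into $\bar e^{T}\left(\bar P\bar E^{-1}\bar A+\bar A^{T}\bar E^{-T}\bar P-\bar N\bar C-\bar C^{T}\bar N^{T}\right)\bar e$, which by \eqref{Equ_H2_Rqr} is $\le-\lambda_{\min}\|\bar e\|^{2}$ for some $\lambda_{\min}>0$. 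In the measurement-noise group I use $(H_{2}\bar C)^{T}=\bar P\bar E^{-1}\bar L C_{\omega}$ to get $\bar e^{T}\bar P\bar E^{-1}\bar L C_{\omega}=s_{2}^{T}$, so this group equals $2s_{2}^{T}(u_{s2}-\omega)$; with $u_{s2}=-\overline{\omega}\,\textbf{sgn}(s_{2})$ one has $2s_{2}^{T}u_{s2}=-2\overline{\omega}\|s_{2}\|_{1}$ while $-2s_{2}^{T}\omega\le 2\|\omega\|_{\infty}\|s_{2}\|_{1}\le 2\overline{\omega}\|s_{2}\|_{1}$, so the group is $\le 0$. In the disturbance group, $(H_{1}\bar C)^{T}=\bar P\bar E^{-1}\bar B_f$ gives $\bar e^{T}\bar P\bar E^{-1}\bar B_f=s_{1}^{T}$, so it equals $2s_{1}^{T}(\bar d-u_{s1})$; using $\|\bar d\|\le\overline{\mathbf{d}}+\overline{\mathbf{h}}\Phi^{-1}$ (by \textbf{Assumption \ref{Asmp_Upper_Bdr}}, in the relevant norm) together with $u_{s1}=(\overline{\mathbf{d}}+\overline{\mathbf{h}}\Phi^{-1}+\eta)\,\textbf{sgn}(s_{1})$ from \eqref{Equ_Discontinus_Function}, this group is $\le-2\eta\|s_{1}\|_{1}\le 0$.

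Summing the three bounds gives $\dot V\le-\lambda_{\min}\|\bar e\|^{2}<0$ for $\bar e\neq 0$, so $\bar e(t)\to 0$ and \eqref{Equ_ObserverError} is (globally) asymptotically stable — hence \eqref{Equ_Aug_SMO} is effective; the same two strict margins (the $-2\eta\|s_{1}\|_{1}$ term and the net inequality on $s_{2}$) also yield finite-time hitting of the surfaces $s_{1}=0$ and $s_{2}=0$, so a separate reaching-phase analysis is unnecessary. The Rayleigh-quotient bound on the drift term and the norm estimates on the injections are routine; the one point needing care is the cancellation of the two cross terms, which is exactly what \eqref{Equ_HC_Constraint} is engineered to achieve once $\bar L$ is parametrized through $\bar N$ — this is why the hypothesis is stated as feasibility of the LMI \eqref{Equ_H2_Rqr}, with the explicit construction of $H_{1},H_{2}$ realizing \eqref{Equ_HC_Constraint} deferred to Section \ref{Subsec:CAS}.
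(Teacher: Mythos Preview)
Your proposal is correct and follows essentially the same route as the paper: the same quadratic Lyapunov function $V=\bar e^{T}\bar P\bar e$, the same three-term decomposition of $\dot V$, and the same use of the coupling constraints \eqref{Equ_HC_Constraint} to reduce the injection terms to sign-definite expressions in $s_1$ and $s_2$. Your write-up is in fact a bit sharper than the paper's (you extract the strict decay $-\lambda_{\min}\|\bar e\|^2$ from the strict LMI \eqref{Equ_H2_Rqr}, whereas the paper stops at $\dot V\le 0$); the only small overreach is the aside about finite-time hitting of $s_2=0$, since the noise group is bounded only by $\le 0$ with no strict margin when $\|\omega\|_\infty=\overline{\omega}$.
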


\begin{proof}
Define the Lyapunov functional below
\begin{equation}
V(t) = \bar{e}^T(t) \bar{P} \bar{e}(t)
\end{equation}
whose derivative is
\begin{equation*}
\begin{array}{rcl}
\dot{V}(t)
&=& 2 \bar{e}^T(t) \bar{P} \bar{E}^{-1} \left(\bar{A}  - \bar{L}\bar{C}\right)\bar{e}(t) \\
&& +2 \bar{e}^T(t) \bar{P} \bar{E}^{-1}\bar{L} {C}_{\omega} (u_{s2}(t)-\omega) \\
&& +2 \bar{e}^T(t) \bar{P} \bar{E}^{-1}\bar{B}_f\left(\bar{d}  - u_{s1}(t)\right)
\end{array}
\end{equation*}
If one defines matrix $\bar{N}$ that $\bar{N} =\bar{P}\bar{E}^{-1}\bar{L}$, and the given constraints in \eqref{Equ_HC_Constraint}, one has $(H_1\bar{C})^T = \bar{P}\bar{E}\bar{B}_f$ and $(H_2\bar{C})^T = \bar{P}\bar{E}^{-1}\bar{L}C_{\omega} = \bar{N}C_{\omega}$, therefore
\begin{eqnarray*}
&& \bar{e}^T(t) \bar{P}  \bar{E}^{-1} \bar{B}_f(\bar{d} (t)- u_{s1}(t))\\
&=& \bar{e}^T(t) \bar{P}  \bar{E}^{-1}  \bar{B}_f \left(\bar{d}(t)- (\overline{\mathbf{d}}+ \overline{\mathbf{h}} \Phi^{-1}+\eta)  \textbf{sgn}\left(s_1(t) \right)\right) \\
&= & s_1^{T}(t) \left(\bar{d}(t)- (\overline{\mathbf{d}}+ \overline{\mathbf{h}} \Phi^{-1}+\eta)  \textbf{sgn}\left(s_1(t) \right)\right) \\
%&\leq & |s^{T}(t)| \left(|\bar{d}(t)|- (\overline{\mathbf{d}}+ \overline{\mathbf{h}} \Phi^{-1}+\eta) \right) \\
&\leq &  |s_1^{T}(t)| \left(|{d}(t)|+|\Phi^{-1}\dot{d}(t)|- (\overline{\mathbf{d}}+ \overline{\mathbf{h}} \Phi^{-1}+\eta) \right)\\
&\leq & - \eta |s_1^{T}(t)|\\
&\leq & 0
\end{eqnarray*}
and the gaussian noise $\omega \in [-\bar{\omega}, \bar{\omega}]$
\begin{eqnarray*}
 &&   \bar{e}^T(t) \bar{P} \bar{E}^{-1}\bar{L} {C}_{\omega} (u_{s2}(t)-\omega) \\
 &=&  \bar{e}^T(t) \bar{P} \bar{E}^{-1}\bar{L} {C}_{\omega} (-\bar{\omega}\textbf{sgn}\left(s_2(t) \right)-\omega) \\
 &=&  s^T_2(t)(-\bar{\omega}\textbf{sgn}\left(s_2(t) \right)-\omega) \\
 &\leq& (-\bar{\omega}-\omega) |s_2^T| \\
 &\leq& 0
\end{eqnarray*}
And, one can further verify that, if the inequality \eqref{Equ_H2_Rqr} holds
\begin{equation*}
\begin{array}{rcl}
\dot{V}(t) &\leq&  0
\end{array}
\end{equation*}
The system \eqref{Equ_ObserverError} is stable.
\end{proof}

\subsection{Constraint Approximation Solution} \label{Subsec:CAS}
The constraints \eqref{Equ_HC_Constraint} and the discontinuous functions \eqref{Equ_Discontinus_Function} in \eqref{Equ_Aug_SMO} are vital.
The gains $H_1$ and $H_2$ can be computed approximately via the following method.

For the equation of \eqref{Equ_HC_Constraint}, one can obtain the gain $H_i$, for $i=1,2$, by the following form approximatively
\begin{equation}
\text{Trace} \left[ \Xi  \Xi ^T\right] =0
\end{equation}
which means there exists a positive scalar $\mu$ that
\begin{equation}
\Xi \Xi^T \leq \mu I
\end{equation}
with the Schur complement theory, one has that
\begin{equation}
\left[
\begin{array}{cc}
-\mu I & \Xi \\
\ast & -I
\end{array}
\right] \leq 0
\end{equation}
where $\Xi$ can be $(H_1\bar{C})^T - \bar{P}\bar{E}^{-1} \bar{B}_f$ or $(H_2\bar{C})^T - \bar{P}\bar{E}^{-1}\bar{L}C_{\omega}$ and $\bar{P}$ is the solution value in \textbf{Theorem \ref{Theo_SMO}}.

\subsection{SMO based Compensator}
\begin{figure}[h]
\centering
	\includegraphics[ width=0.8\linewidth]{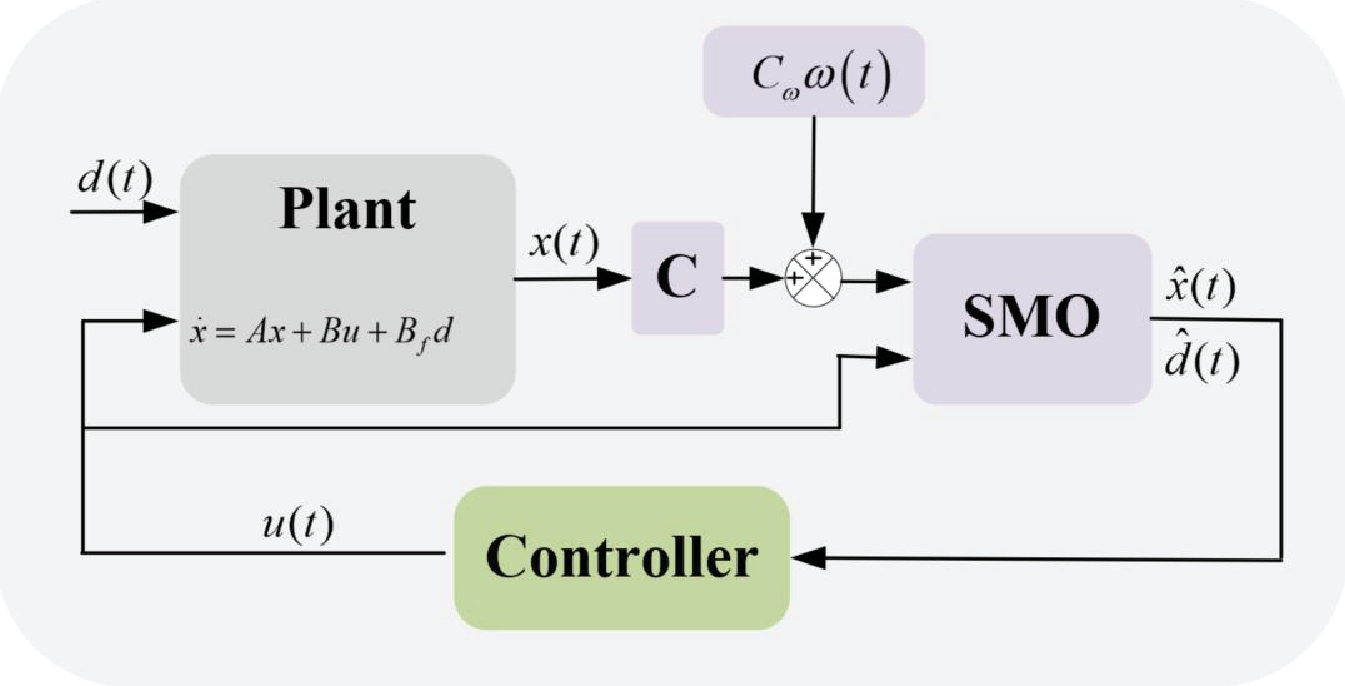}
	\caption{Traditional SMO based Controller}
    \label{Fig_PSMO_Controller}
\end{figure}

Depicted as in Fig.\ref{Fig_PSMO_Controller}, the SMO based control structure can be designed as follow
\begin{equation}
u(t) = \bar{K} \hat{\bar{x}}(t)
\label{Equ_SMO_Controller}
\end{equation}
where $\bar{K} = \left[K ~ -B^{\dagger}B_f \right]$, $B{\dagger}$ makes that $BB{\dagger}B_f=B_f$.
Then one can obtain the observer state based closed-loop system below
\begin{equation}
\left\{
\begin{array}{rcl}
\dot{x}                &=& (A+BK) x  + [-BK \ B_f] \bar{e}   \\
\bar{E} \dot{\bar{e}}  &=&  \left(\bar{A}  - \bar{L}\bar{C}\right)\bar{e} +\bar{L} {C}_{\omega} (u_{s2}(t)-\omega) \\
&&+ \bar{B}_f(\bar{d}  - u_{s1}(t))
\end{array}
\right.
\label{Equ_SMO_Closed_Loop}
\end{equation}
the effectiveness of the closed-loop system \eqref{Equ_SMO_Closed_Loop} can be examined via the sufficient condition in the following theorem.

\begin{theorem} \label{Theorem_SMO_Compensator}
The controller \eqref{Equ_SMO_Controller} and the observer \eqref{Equ_Aug_SMO} are effective, i.e. \eqref{Equ_SMO_Closed_Loop} is stable, when there exist positive matrices $\bar{P}$, $Q$ and matrix $\bar{N}$ with proper dimensions under the given controller gain $\bar{K}$ such that
\begin{equation}
\begin{array}{rcl}
\Xi &=& \left[ \begin{array}{cc} \Xi_{11} & \Xi_{12} \\ \Xi_{12}^T & \Xi_{22} \end{array}\right] <0
\end{array}
\label{Equ_ClsdLp_Xi}
\end{equation}
where
\begin{equation*}
\begin{array}{rcl}
\Xi_{11} &=& Q (A+BK)+  (A+BK)^T Q , \\
\Xi_{12} &=& Q [-BK \ B_f] ,  \\
\Xi_{22} &=& \bar{P} \bar{E}^{-1} \bar{A}+ \bar{A}^T\bar{E}^{-T}\bar{P}- \bar{N}\bar{C}-\bar{C}^T\bar{N}^T
\end{array}
\end{equation*}
and the observer gain $\bar{L}$ is designed as $\bar{L} =  \bar{E}\bar{P}^{-1} \bar{N} $.
\end{theorem}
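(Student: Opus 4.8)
The plan is to treat \eqref{Equ_SMO_Closed_Loop} as a cascade and build a single composite Lyapunov functional by adding a quadratic form in the plant state to the functional already used in \textbf{Theorem \ref{Theo_SMO}}. Concretely I would take
\begin{equation*}
W(t) = x^T(t) Q x(t) + \bar{e}^T(t)\bar{P}\bar{e}(t),
\end{equation*}
with $Q>0$ and $\bar{P}>0$ the matrices appearing in the LMI \eqref{Equ_ClsdLp_Xi}. Since both blocks are positive definite, $W$ is positive definite and radially unbounded, so it suffices to show $\dot{W}<0$ along the trajectories of \eqref{Equ_SMO_Closed_Loop} for $(x,\bar{e})\neq 0$.

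Next I would differentiate $W$ block by block. The $\bar{e}$-block is identical to the computation in the proof of \textbf{Theorem \ref{Theo_SMO}}: using $\bar{N}=\bar{P}\bar{E}^{-1}\bar{L}$ together with the constraints \eqref{Equ_HC_Constraint}, the two discontinuous-injection terms $2\bar{e}^T\bar{P}\bar{E}^{-1}\bar{B}_f(\bar{d}-u_{s1})$ and $2\bar{e}^T\bar{P}\bar{E}^{-1}\bar{L}C_{\omega}(u_{s2}-\omega)$ are again nonpositive (these are exactly the $-\eta|s_1|$ and $(-\bar{\omega}-\omega)|s_2|$ estimates), while the linear part contributes, after symmetrization and the substitutions $\bar{P}\bar{E}^{-1}\bar{L}\bar{C}=\bar{N}\bar{C}$ and its transpose, precisely $\bar{e}^T\Xi_{22}\bar{e}$. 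The $x$-block, using $\dot{x}=(A+BK)x+[-BK\ B_f]\bar{e}$, gives $2x^TQ\dot{x}=x^T\Xi_{11}x+2x^T\Xi_{12}\bar{e}$. Collecting the pieces yields
\begin{equation*}
\dot{W}(t) \le x^T\Xi_{11}x + 2x^T\Xi_{12}\bar{e} + \bar{e}^T\Xi_{22}\bar{e} = \begin{bmatrix} x \\ \bar{e}\end{bmatrix}^T \Xi \begin{bmatrix} x \\ \bar{e}\end{bmatrix},
\end{equation*}
with $\Xi$ the block matrix of \eqref{Equ_ClsdLp_Xi}; hence $\Xi<0$ forces $\dot{W}<0$ away from the origin, establishing asymptotic stability of \eqref{Equ_SMO_Closed_Loop}, i.e.\ the joint effectiveness of the controller \eqref{Equ_SMO_Controller} and the observer \eqref{Equ_Aug_SMO}. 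I would close by remarking that $\bar{L}=\bar{E}\bar{P}^{-1}\bar{N}$ is merely a rewriting of $\bar{N}=\bar{P}\bar{E}^{-1}\bar{L}$, so the stated gain formula is consistent with the LMI decision variables.

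\noindent\textbf{Main obstacle.} The only genuinely delicate point is the coupling term $2x^TQ[-BK\ B_f]\bar{e}$: unlike the pure-observer case, it prevents stability from following simply from the separate negativity of $\Xi_{11}$ and $\Xi_{22}$, and this is exactly why the full $2\times 2$ block negativity in \eqref{Equ_ClsdLp_Xi} is required (equivalently, by Schur complement, a domination of the off-diagonal block $\Xi_{12}$). I would also make sure the sign-function terms of the $\bar{e}$-dynamics never reappear in the $x$-dynamics --- they do not, since $u_{s1},u_{s2}$ enter \eqref{Equ_SMO_Closed_Loop} only through $\bar{E}\dot{\bar{e}}$ --- so the inequality step above is justified verbatim by the bounds already proven in \textbf{Theorem \ref{Theo_SMO}}, and no new estimate on $d(t)$ or $\omega(t)$ beyond \textbf{Assumption \ref{Asmp_Upper_Bdr}} is needed.
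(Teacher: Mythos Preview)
Your proposal is correct and follows essentially the same approach as the paper: a composite Lyapunov functional $V(t)=x^TQx+\bar{e}^T\bar{P}\bar{e}$, reuse of the \textbf{Theorem~\ref{Theo_SMO}} estimates to dispose of the discontinuous-injection terms in the $\bar{e}$-block, and collection of the remaining quadratic terms into $\xi^T\Xi\xi$ with $\xi=[x^T\ \bar{e}^T]^T$. Your write-up is in fact more detailed than the paper's own argument (which simply states that $\dot V_2$ is handled ``similar to \textbf{Theorem~\ref{Theo_SMO}}'' and then writes down $\dot V_1$), so there is nothing to add.
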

\begin{proof}
Set Lyapunov equation below
\begin{equation}
V(t) = V_1(t)+ V_2(t)
\end{equation}
where
\begin{eqnarray*}
V_1(t)= x^T(t) Q x(t),  V_2(t)= \bar{e}^T (t)\bar{P} \bar{e}(t)
\end{eqnarray*}
Similar to \textbf{Theorem}.\ref{Theo_SMO}, the derivative of $V_2(t)$ is negative when there exist matrices $P$, $\bar{N}$ satisfy the inequality.
The derivative of $V_1(t)$ is
\begin{equation}
\begin{array}{rcl}
\dot{V}_1(t)
&=& 2x^T  Q(A+BK) x  + 2x^T  Q[-BK \ B_f] \bar{e}
\end{array}
\end{equation}
Then, if $\Xi<0$ is satisfied, one can determine $\dot{V}(t)= \dot{V}_1(t)+\dot{V}_2(t)<0$ holds, \eqref{Equ_SMO_Closed_Loop} is stable.
\end{proof}

\begin{remark}
The term $\bar{C}\bar{e}(t)$, is vital in designing the sign function $s_{i}(t)$ and discontinuous function $u_{si}(t)$ for $i=1,2 $, but cannot be calculated directly, especially when the measured output $y(t)$ is mixed with noise.
In this paper by the approximation technique, $\bar{C}\bar{e}(t)$ is computed as $\bar{C}\bar{e}(t)= y(t)- \bar{C}\hat{\bar{x}}(t)$.
\end{remark}

%%%%%%%%%%%%%%%%%%%%%%%%%%%%%%%%%%%%%%%%%%%%%%%%%%%%%%%%%%%%%%%%%%%%%%%%%%%%
%%%%%%%%%%%%%%%%%%%%%%%%%%%%%%%%%%%%%%%%%%%%%%%%%%%%%%%%%%%%%%%%%%%%%%%%%%%%
%%%%%%%%%%%%%%%%%%%%%%%%%%%%%%%%%%%%%%%%%%%%%%%%%%%%%%%%%%%%%%%%%%%%%%%%%%%%
%%%%%%%%%%%%%%%%%%%%%%%%%%%%%%%%%%%%%%%%%%%%%%%%%%%%%%%%%%%%%%%%%%%%%%%%%%%%
\section{Main Result} \label{sec:MainResult}
The estimated disturbance by the SMO method might be mixed with noise and random pulses, which can lead to actuator vibration and be unsuitable for subsequent use in engineering applications. In this section, a two-layer observation structure-based compensation controller is designed. It consists of the traditional SMO from the previous section and a cascade observer (CO), as depicted \ref{Fig_CSMO}.

\begin{figure}[h]
\centering
	\includegraphics[ width=0.8 \linewidth]{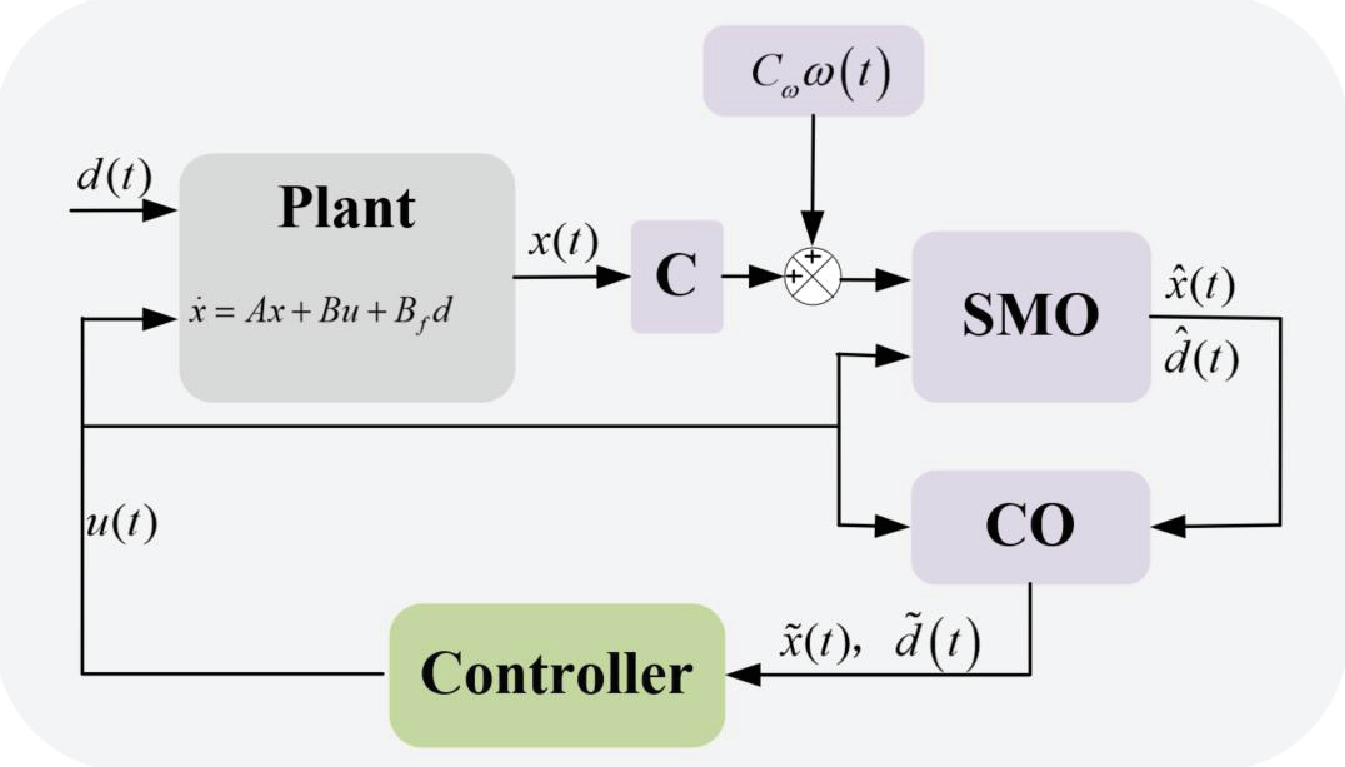}
	\caption{Cascade Observer}
    \label{Fig_CSMO}
\end{figure}

\subsection{Cascade Observer}
The cascade observer is designed to be
\begin{equation}
\begin{array}{rcl}
\bar{E} \dot{\tilde{\bar{x}}}  &=&  \bar{A} \tilde{\bar{x}} + \bar{B}{u} + \bar{\mathcal{L}}(\hat{\bar{x}}- \tilde{\bar{x}}) +\bar{B}_f u_{s1}(t)\\
                               & & - \bar{L} C_{\omega} u_{s2}(t)+\bar{L} C_{\omega} u_{s3}(t)
\end{array}
\label{Equ_Cascade_Observer}
\end{equation}
where $\tilde{\bar{x}}(t)= [\tilde{x}^T(t) $$ ~ \tilde{d}^T(t)]^T$ estimates $\hat{\bar{x}}(t)= $ $[\hat{x}^T(t) ~ $ $\hat{d}^T(t)]^T$.
$u_{s1}(t)$ and $u_{s2}(t)$ are the same as in \eqref{Equ_Discontinus_Function}, $u_{s3}(t)$ is to be designed in this section.
The observer gains $L$, which is also used in the SMO layer, and $\bar{\mathcal{L}}$ should be designed simultaneously to guarantee the effectiveness of the cascade observer state $\tilde{\bar{x}}(t)$.
%
%\begin{equation}
%\bar{E} \dot{\hat{\bar{x}}} = \bar{A} \hat{\bar{x}} + \bar{B}u + \bar{L}(y-\bar{C} \hat{\bar{x}})+\bar{B}_f u_{s3}(t)
%\end{equation}
%
%\begin{equation*}
%\begin{array}{rcl}
%\bar{E} \dot{\bar{e}}  &=& \left(\bar{A}  - \bar{L}\bar{C}\right)\bar{e} -\bar{L} {C}_{\omega} \omega + \bar{B}_f(\bar{d}  - u_{s3}(t))\\
%\bar{E} \dot{\tilde{\bar{\epsilon}}} &=& (\bar{A}-\bar{\mathcal{L}}) \tilde{\bar{\epsilon}}+\bar{L}\bar{C}\bar{e}(t)+\bar{L} C_{\omega} \omega
%\end{array}
%\end{equation*}

Define the cascade observer error $\bar{\epsilon}(t)$ below
\begin{equation*}
\bar{\epsilon}(t) = \hat{\bar{x}}(t) - \tilde{\bar{x}}(t)
\end{equation*}
one has that
\begin{equation}
\left\{
\begin{array}{l}
\bar{E} \dot{\bar{e}}  =  \left(\bar{A}  - \bar{L}\bar{C}\right)\bar{e}  +\bar{L} {C}_{\omega} (u_{s2}(t)-\omega)  + \bar{B}_f(\bar{d}  - u_{s1}(t)) \\
\bar{E} \dot{\bar{\epsilon}} = \left( \bar{A}  -\bar{\mathcal{L}}\right)\bar{\epsilon}+ \bar{L}\bar{C}\bar{e}+ \bar{L}C_{\omega}(\omega(t) -  u_{s3}(t)) \\
\end{array}
\right.
\label{Equ_CSMOErr}
\end{equation}
and the new discontinuous functional $u_{s3}(t)$ is selected to be
\begin{equation}
u_{s3}(t) = \overline{\omega} \textbf{sgn}(s_{3}(t))
\label{Equ_Discontinus_Function_S}
\end{equation}
where
\begin{equation}
s_3(t) = H_3 \bar{\epsilon}(t)
\end{equation}
matrix $\bar{P} $ is related with sufficient theorem given below that
\begin{equation}
\bar{P}  \bar{E}^{-1} \bar{L} C_{\omega} = \bar{N}_1 C_{\omega} = H_3^T
\label{Equ_HC_Constraint_S}
\end{equation}
This constraint is also computed via the method in subsection.\ref{Subsec:CAS}.
\begin{theorem} \label{Theorem_SOM_CO}
With the discontinuous functions $u_{si}(t)$, for $i=1,2,3$, the error dynamics \eqref{Equ_CSMOErr} is stable, i.e. the SMO-CO method proposed in this paper is effective, if there exist positive matrix $\bar{P}$ and matrices $\bar{N}_1$, $\bar{N}_2$ with appropriate dimensions that
\begin{equation}
\Xi =
\left[ \begin{array}{cc}
\Xi_{11}   & \Xi_{12}  \\
\ast       &  \Xi_{22}
\end{array} \right] <0
\label{Equ_Inequality_CSMOErr}
\end{equation}
where
\begin{eqnarray*}
\Xi_{11} &=& \bar{P} \bar{E}^{-1}\bar{A} +(\bar{P} \bar{E}^{-1}\bar{A})^T   - \bar{N}_1\bar{C} -\bar{C}^T \bar{N}_1^T, \\
\Xi_{12} &=& (\bar{N}_1 \bar{C})^T ,\\
\Xi_{22} &=&  \bar{P} \bar{E}^{-1}\bar{A} +(\bar{P} \bar{E}^{-1}\bar{A})^T  -\bar{N}_2 - \bar{N}_2^T.
\end{eqnarray*}
and the observer gains $\bar{L}= \bar{E}\bar{P}^{-1} \bar{N}_1$ and $\bar{\mathcal{L}} = \bar{E}\bar{P}^{-1}\bar{N}_2$.
\end{theorem}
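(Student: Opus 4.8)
The plan is to mirror the Lyapunov argument of \textbf{Theorem \ref{Theo_SMO}}, but now carried out on the \emph{stacked} error $\zeta(t)=[\bar{e}^T(t)\ \bar{\epsilon}^T(t)]^T$ with a single quadratic form using the common weight $\bar{P}$. Set $V(t)=V_1(t)+V_2(t)$ with $V_1(t)=\bar{e}^T(t)\bar{P}\bar{e}(t)$ and $V_2(t)=\bar{\epsilon}^T(t)\bar{P}\bar{\epsilon}(t)$, and differentiate along \eqref{Equ_CSMOErr}. The term $\dot{V}_1$ is handled exactly as in the proof of \textbf{Theorem \ref{Theo_SMO}}: using $(H_1\bar{C})^T=\bar{P}\bar{E}^{-1}\bar{B}_f$, \textbf{Assumption \ref{Asmp_Upper_Bdr}} and $\bar{d}=d+\Phi^{-1}\dot{d}$ one gets $2\bar{e}^T\bar{P}\bar{E}^{-1}\bar{B}_f(\bar{d}-u_{s1})\le -2\eta|s_1|\le 0$; using $(H_2\bar{C})^T=\bar{P}\bar{E}^{-1}\bar{L}C_{\omega}$ together with $\omega\in[-\overline{\omega},\overline{\omega}]$ one gets $2\bar{e}^T\bar{P}\bar{E}^{-1}\bar{L}C_{\omega}(u_{s2}-\omega)\le 0$; and after substituting $\bar{L}=\bar{E}\bar{P}^{-1}\bar{N}_1$, so that $\bar{P}\bar{E}^{-1}\bar{L}=\bar{N}_1$, what remains is precisely $\bar{e}^T\Xi_{11}\bar{e}$.

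For $\dot{V}_2$, differentiating along the $\bar{\epsilon}$-dynamics yields three groups of terms. The drift term $2\bar{\epsilon}^T\bar{P}\bar{E}^{-1}(\bar{A}-\bar{\mathcal{L}})\bar{\epsilon}$ becomes $\bar{\epsilon}^T\Xi_{22}\bar{\epsilon}$ once $\bar{\mathcal{L}}=\bar{E}\bar{P}^{-1}\bar{N}_2$ is inserted, so that $\bar{P}\bar{E}^{-1}\bar{\mathcal{L}}=\bar{N}_2$. The discontinuous term $2\bar{\epsilon}^T\bar{P}\bar{E}^{-1}\bar{L}C_{\omega}(\omega-u_{s3})$ is made nonpositive by invoking the constraint \eqref{Equ_HC_Constraint_S}, i.e. $\bar{P}\bar{E}^{-1}\bar{L}C_{\omega}=H_3^T$, which turns it into $2s_3^T(\omega-\overline{\omega}\,\textbf{sgn}(s_3))\le 0$ because $|\omega|\le\overline{\omega}$, exactly as for $u_{s2}$. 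Finally the coupling term $2\bar{\epsilon}^T\bar{P}\bar{E}^{-1}\bar{L}\bar{C}\bar{e}=2\bar{\epsilon}^T\bar{N}_1\bar{C}\bar{e}$ (which carries no sign of its own) is rewritten as $2\bar{e}^T(\bar{N}_1\bar{C})^T\bar{\epsilon}=2\bar{e}^T\Xi_{12}\bar{\epsilon}$. Adding the two contributions gives $\dot{V}(t)\le \bar{e}^T\Xi_{11}\bar{e}+2\bar{e}^T\Xi_{12}\bar{\epsilon}+\bar{\epsilon}^T\Xi_{22}\bar{\epsilon}=\zeta^T\Xi\zeta$, so $\Xi<0$ in \eqref{Equ_Inequality_CSMOErr} forces $\dot{V}(t)<0$ for $\zeta(t)\neq 0$ and hence $\bar{e}(t)\to 0$, $\bar{\epsilon}(t)\to 0$.

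The one real subtlety — the place to take care — is the coupling signal $\bar{L}\bar{C}\bar{e}$ that the SMO layer injects into the cascade layer: since it carries no discontinuous sign that could dominate it, it cannot be absorbed layer-by-layer, which is what forces the joint (stacked) quadratic form rather than two separate Lyapunov arguments. The computation closes cleanly only because the \emph{same} matrix $\bar{P}$ weights both $\bar{e}$ and $\bar{\epsilon}$, so that $2\bar{\epsilon}^T\bar{P}\bar{E}^{-1}\bar{L}\bar{C}\bar{e}$ collapses exactly to the off-diagonal block $\Xi_{12}=(\bar{N}_1\bar{C})^T$ of \eqref{Equ_Inequality_CSMOErr}; with two distinct weights the cross term would not match. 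A secondary point worth one sentence is that, as in \textbf{Theorem \ref{Theo_SMO}} and the accompanying \textbf{Remark}, the sliding variables $s_1,s_2$ are realized through $\bar{C}\bar{e}=y-\bar{C}\hat{\bar{x}}$ while $s_3=H_3\bar{\epsilon}$ with $\bar{\epsilon}=\hat{\bar{x}}-\tilde{\bar{x}}$ is directly available, and that the equalities on $H_1,H_2,H_3$ hold only approximately via the procedure of Section \ref{Subsec:CAS}; treating them as exact is precisely what makes the three discontinuous terms vanish from the bound on $\dot{V}$.
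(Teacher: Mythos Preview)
Your proposal is correct and follows essentially the same route as the paper's proof: the same quadratic Lyapunov function $V=\bar{e}^T\bar{P}\bar{e}+\bar{\epsilon}^T\bar{P}\bar{\epsilon}$ with common weight $\bar{P}$, the same use of the constraints \eqref{Equ_HC_Constraint} and \eqref{Equ_HC_Constraint_S} to render the three discontinuous terms nonpositive, and the same substitution $\bar{P}\bar{E}^{-1}\bar{L}=\bar{N}_1$, $\bar{P}\bar{E}^{-1}\bar{\mathcal{L}}=\bar{N}_2$ so that the remaining quadratic form is exactly $\zeta^T\Xi\zeta$. Your added remarks on why a common $\bar{P}$ is needed for the cross term and on the approximate nature of the $H_i$ constraints are accurate and go slightly beyond what the paper spells out.
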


\begin{proof}
Define the Lyapunov functional below
\begin{equation}
\begin{array}{rcl}
V(t)   &=& V_1(t)+V_2(t) \\
V_1(t) &=& \bar{e}^T(t) \bar{P}  \bar{e}(t) ,
V_2(t) =\bar{\epsilon}^T(t) \bar{P}  \bar{\epsilon} (t)
\end{array}
\end{equation}
the derivatives of $V_1(t)$ and $V_2(t)$ are
\begin{eqnarray*}
\dot{V}_1(t) &=& 2\bar{e}^T(t) \bar{P} \dot{\bar{e}}(t) \\
&=& 2\bar{e}^T(t) \bar{P} \bar{E}^{-1} \left(\bar{A}  - \bar{L}\bar{C}\right)\bar{e}(t) \\
&& +2\bar{e}^T(t) \bar{P} \bar{E}^{-1}  \bar{L} {C}_{\omega} (u_{s2}(t)-\omega)  \\
&& + 2\bar{e}^T(t) \bar{P} \bar{E}^{-1} \bar{B}_f\left(\bar{d}  - u_{s1}(t)\right)   \\
\dot{V}_2(t) &=& 2\bar{\epsilon}^T(t) \bar{P}  \dot{\bar{\epsilon}}(t) \\
&=& 2\bar{\epsilon}^T(t) \bar{P}  \bar{E}^{-1}\bar{A} \bar{\epsilon}(t)
- 2\bar{\epsilon}^T(t) \bar{P}  \bar{E}^{-1} \bar{\mathcal{L}} \bar{\epsilon}(t) \\
&&+2\bar{\epsilon}^T(t) \bar{P}  \bar{E}^{-1} \bar{L} \bar{C} \bar{e}(t) \\
&&+2\bar{\epsilon}^T(t) \bar{P}  \bar{E}^{-1} \bar{L} C_{\omega} \left( \omega(t)- u_{s3}(t) \right)
\end{eqnarray*}
With $\omega \in [-\bar{\omega}, \bar{\omega}]$, constraints $(H_1\bar{C})^T =$ $ \bar{P}\bar{E}\bar{B}_f$, $(H_2\bar{C})^T = $ $ \bar{P}\bar{E}^{-1}\bar{L}C_{\omega} $$= \bar{N}_2 C_{\omega}$, $ H_3^T = \bar{P}\bar{E}^{-1}\bar{L}C_{\omega} = \bar{N}_1 C_{\omega}$, and the discontinuous functions in \eqref{Equ_HC_Constraint}, similar in \textbf{Theorem}.\ref{Theo_SMO}, it verifies
$\bar{e}^T(t) \bar{P}  \bar{E}^{-1} \bar{B}_f(\bar{d} (t)- u_{s1}(t))  \leq   0$,
$\bar{e}^T(t) \bar{P} \bar{E}^{-1}\bar{L} {C}_{\omega} (u_{s2}(t)-\omega)  \leq  0$
and
\begin{eqnarray*}
&&   \bar{\epsilon}^T(t) \bar{P}  \bar{E}^{-1} \bar{L} C_{\omega} \left( \omega(t)- u_{s3}(t) \right) \\
&=&  s^T(t) \left( \omega(t)- \bar{\omega} \textbf{sgn}(s(t)) \right) \\
&\leq& (- \bar{\omega} +\omega(t)) |s^T(t)  | \\
&\leq& 0
\end{eqnarray*}
And one obtains
\begin{equation}
\begin{array}{rcl}
\dot{V}_1(t) &\leq& 2\bar{e}^T(t) \bar{P} \bar{E}^{-1}\left(\bar{A}  - \bar{L}\bar{C}\right)\bar{e}(t) \\
\dot{V}_2(t) &\leq& 2\bar{\epsilon}^T(t) \bar{P}  \bar{E}^{-1}\bar{A} \bar{\epsilon}(t)
- 2\bar{\epsilon}^T(t) \bar{P}  \bar{E}^{-1} \bar{\mathcal{L}} \bar{\epsilon}(t) \\
&&+2\bar{\epsilon}^T(t) \bar{P}  \bar{E}^{-1} \bar{L} \bar{C} \bar{e}(t)
\end{array}
\end{equation}
With $\bar{L}= \bar{E}\bar{P}^{-1} \bar{N}_1$ and $\bar{\mathcal{L}} = \bar{E}\bar{P}^{-1}\bar{N}_2$,
defining $\xi(t)=[\bar{e}^T(t), \bar{\epsilon}^T(t)]^T$,
it means that if the inequality \eqref{Equ_Inequality_CSMOErr} holds,
\begin{equation}
\dot{V}(t) = \xi^T(t) \Xi \xi(t) <0
\end{equation}
and \eqref{Equ_CSMOErr} is stable.
\end{proof}

\begin{remark}
The keys to designing the SMO in \eqref{Equ_Aug_SMO} and the SMO-CO in \eqref{Equ_Cascade_Observer} are the multiple discontinuous functional $u_{s1}$, $u_{s2}$ and $u_{s3}$.
The multiple discontinuous functions can make up the affection of the lumped disturbance and the sensor noises for the two layers of the observer proposed in this letter.
In addition to the cascade observation structure, this is another difference compared with other SMO methods.
\end{remark}

\subsection{SMO-CO based Compensator}
The SMO-CO based compensation controller is
\begin{equation}
u(t) = \bar{K} \tilde{\bar{x}}(t)
\label{Equ_SMOCO_Controller}
\end{equation}
where
$\tilde{\bar{x}}(t) = \hat{\bar{x}}(t) -  \bar{\epsilon} (t) = \bar{x}(t)- \bar{e}(t) - \bar{\epsilon}(t)$,
$\bar{K}$ is same as in \eqref{Equ_SMO_Controller} that $\bar{K} = \left[K ~ -B^{\dagger}B_f \right]$, and $B{\dagger}$ makes $BB{\dagger}B_f=B_f$.

The SMO-CO based closed-loop system can be verified
\begin{equation}
\left\{
\begin{array}{rcl}
\dot{x}  &=&  \left( A  + BK \right) x -B \bar{K} \bar{e} - B \bar{K} \bar{\epsilon} \\
\bar{E} \dot{\bar{e}}  &=&  \left(\bar{A}  - \bar{L}\bar{C}\right)\bar{e}  +\bar{L} {C}_{\omega} (u_{s2}(t)-\omega)  \\
&&+ \bar{B}_f(\bar{d}  - u_{s1}(t)) \\
\bar{E} \dot{\bar{\epsilon}}&=&\left( \bar{A}  -\bar{\mathcal{L}}\right)\bar{\epsilon}+ \bar{L}\bar{C}\bar{e}+ \bar{L}C_{\omega}(\omega(t) -  u_{s3}(t)) \\
\end{array}
\right.
\label{Equ_SMOCO_Controller}
\end{equation}

\begin{theorem} \label{Theo_SMO_CO_Controller}
The closed-loop system \eqref{Equ_SMOCO_Controller} is stable, i.e. the SMO-CO  \eqref{Equ_Aug_SMO} and \eqref{Equ_Cascade_Observer}, and the compensation feedback controller \eqref{Equ_SMOCO_Controller} are effective, when there exist positive matrices $Q$ and $\bar{P}$, and matrices $\bar{N}_1$, $\bar{N}_2$ with appropriate dimensions that
\begin{equation}
\Xi =\left[\begin{array}{ccc}
       \Xi_{11}& \Xi_{12}& \Xi_{13} \\
       \ast    & \Xi_{22}& \Xi_{23} \\
       \ast    & \ast    & \Xi_{33} \\
      \end{array} \right]
<0
\label{Equ_ClsdLp_Xi_S}
\end{equation}
where
\begin{eqnarray*}
&\Xi_{11}& = Q(A+BK)+(A+BK)^TQ, \\
&\Xi_{12}& = \Xi_{13} = -QB\bar{K},    \\
&\Xi_{22}& = \bar{P}\bar{E}^{-1}\bar{A} +(\bar{P}\bar{E}^{-1}\bar{A} )^T - \bar{N}_1 \bar{C} -\bar{C}^T \bar{N}_1^T, \\
&\Xi_{23}& = (\bar{N}_1 \bar{C})^T, \\
&\Xi_{33}& = \bar{P} \bar{E}^{-1}\bar{A} +(\bar{P} \bar{E}^{-1}\bar{A})^T  -\bar{N}_2 - \bar{N}_2^T.
\end{eqnarray*}
the observer gains can be selected as $\bar{L}= \bar{E}\bar{P}^{-1} \bar{N}_1$ and $\bar{\mathcal{L}} = \bar{E}\bar{P}^{-1}\bar{N}_2$.
\end{theorem}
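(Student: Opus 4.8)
The plan is to certify stability with a single composite quadratic Lyapunov function built from the two pieces already used in Theorems~\ref{Theo_SMO} and \ref{Theorem_SOM_CO}, plus a plant term. Concretely, stack $\zeta(t)=[x^T(t)\ \bar{e}^T(t)\ \bar{\epsilon}^T(t)]^T$ and take $V(t)=V_0(t)+V_1(t)+V_2(t)$ with $V_0(t)=x^T(t)Qx(t)$, $V_1(t)=\bar{e}^T(t)\bar{P}\bar{e}(t)$, $V_2(t)=\bar{\epsilon}^T(t)\bar{P}\bar{\epsilon}(t)$. The claim is then that $\dot V(t)\le \zeta^T(t)\Xi\zeta(t)$ with $\Xi$ the block matrix in \eqref{Equ_ClsdLp_Xi_S}, so that feasibility of $\Xi<0$ forces $\dot V<0$ and asymptotic stability of \eqref{Equ_SMOCO_Controller}.

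First I would differentiate $V_1$ and $V_2$ along the error dynamics in \eqref{Equ_SMOCO_Controller} (which coincide with \eqref{Equ_CSMOErr}). Exactly as in the proofs of Theorems~\ref{Theo_SMO} and \ref{Theorem_SOM_CO}, the three contributions carrying the switching terms, namely $\bar{e}^T\bar{P}\bar{E}^{-1}\bar{B}_f(\bar{d}-u_{s1})$, $\bar{e}^T\bar{P}\bar{E}^{-1}\bar{L}C_\omega(u_{s2}-\omega)$ and $\bar{\epsilon}^T\bar{P}\bar{E}^{-1}\bar{L}C_\omega(\omega-u_{s3})$, collapse onto the sliding variables $s_1=H_1\bar{C}\bar e$, $s_2=H_2\bar{C}\bar e$, $s_3=H_3\bar\epsilon$ through the matching constraints \eqref{Equ_HC_Constraint} and \eqref{Equ_HC_Constraint_S}, and each is $\le 0$ because $u_{s1}$, $u_{s2}$, $u_{s3}$ dominate $\bar d$ and $\omega$ in the sense of \eqref{Equ_Assumption_1_bdr} and $\omega\in[-\overline{\omega},\overline{\omega}]$; hence they may be discarded. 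What remains is $\dot V_1\le 2\bar{e}^T\bar{P}\bar{E}^{-1}(\bar A-\bar L\bar C)\bar e$ and $\dot V_2\le 2\bar{\epsilon}^T\bar{P}\bar{E}^{-1}\bar A\bar\epsilon-2\bar{\epsilon}^T\bar{P}\bar{E}^{-1}\bar{\mathcal L}\bar\epsilon+2\bar{\epsilon}^T\bar{P}\bar{E}^{-1}\bar L\bar C\bar e$. The only genuinely new computation is $\dot V_0=2x^TQ(A+BK)x-2x^TQB\bar K\bar e-2x^TQB\bar K\bar\epsilon$, which follows directly from $u=\bar K\tilde{\bar x}=\bar K(\bar x-\bar e-\bar\epsilon)$, so that both observer errors enter the plant with gain $-B\bar K$.

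Then I would substitute $\bar L=\bar E\bar P^{-1}\bar N_1$ and $\bar{\mathcal L}=\bar E\bar P^{-1}\bar N_2$, which cancels $\bar E$ and yields $\bar P\bar E^{-1}\bar L\bar C=\bar N_1\bar C$ and $\bar P\bar E^{-1}\bar{\mathcal L}=\bar N_2$. Symmetrizing each scalar via $2a^TMb=a^TMb+b^TM^Ta$ and regrouping in $\zeta$ reproduces $\Xi$ block by block: the $(1,1)$ block $Q(A+BK)+(A+BK)^TQ$ from $\dot V_0$; the identical $(1,2)$ and $(1,3)$ blocks $-QB\bar K$ from the two error-injection terms; the $(2,2)$ block $\bar P\bar E^{-1}\bar A+(\bar P\bar E^{-1}\bar A)^T-\bar N_1\bar C-\bar C^T\bar N_1^T$ from $\dot V_1$; the $(2,3)$ block $(\bar N_1\bar C)^T$ from the coupling $\bar L\bar C\bar e$ in $\dot V_2$; and the $(3,3)$ block $\bar P\bar E^{-1}\bar A+(\bar P\bar E^{-1}\bar A)^T-\bar N_2-\bar N_2^T$ from the remainder of $\dot V_2$. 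Feasibility of \eqref{Equ_ClsdLp_Xi_S} then gives $\dot V(t)=\zeta^T(t)\Xi\zeta(t)<0$ for $\zeta\ne0$, the origin of \eqref{Equ_SMOCO_Controller} is asymptotically stable, and the observer gains $\bar L=\bar E\bar P^{-1}\bar N_1$, $\bar{\mathcal L}=\bar E\bar P^{-1}\bar N_2$ are read off from the change of variables.

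The hard part is not analytic depth but keeping the cross-coupling bookkeeping consistent: one must verify that the term $\bar\epsilon^T\bar P\bar E^{-1}\bar L\bar C\bar e$ is retained as a genuine quadratic coupling absorbed by the LMI — it produces the off-diagonal $(\bar N_1\bar C)^T$ — whereas only the noise term $\bar\epsilon^T\bar P\bar E^{-1}\bar L C_\omega(\omega-u_{s3})$ is handled by the switching function $u_{s3}$ with surface $s_3=H_3\bar\epsilon$ and constraint $H_3^T=\bar P\bar E^{-1}\bar L C_\omega=\bar N_1C_\omega$. A secondary point, as in Theorem~\ref{Theorem_SMO_Compensator}, is that $Q$ and $\bar P$ need not be coupled by a scaling trick: since the plant-to-error blocks $\Xi_{12},\Xi_{13}$ are carried inside the single LMI, no small-gain argument is required, and the same $\bar P$ serves both $\bar e$ and $\bar\epsilon$ because the two error subsystems share the descriptor structure.
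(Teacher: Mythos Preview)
Your proposal is correct and follows essentially the same approach as the paper: the paper's proof of this theorem is in fact only a sketch that names the composite Lyapunov function $V(t)=x^T(t)Qx(t)+\bar e^T(t)\bar P\bar e(t)+\bar\epsilon^T(t)\bar P\bar\epsilon(t)$ and refers back to the arguments of Theorems~\ref{Theorem_SMO_Compensator} and~\ref{Theorem_SOM_CO}, which you have spelled out in full. One trivial slip: in your last paragraph you write $\dot V(t)=\zeta^T\Xi\zeta$, but since the switching contributions are only $\le 0$ you should keep the inequality $\dot V(t)\le\zeta^T\Xi\zeta<0$, as you correctly stated earlier.
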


The effectiveness of the closed-loop system \eqref{Equ_SMOCO_Controller} can be proved via similar methods in \textbf{Theorem \ref{Theorem_SMO_Compensator}} and \textbf{Theorem \ref{Theorem_SOM_CO}} by choosing a Lyapunov function below
\begin{equation}
V(t)= x^T(t)Qx(t)+ \bar{e}^T(t)\bar{P}\bar{e}(t)+ \bar{\epsilon}^T(t)\bar{P}\bar{\epsilon}(t)
\end{equation}

\begin{remark}
How to select the observer gains $\bar{L}$, $\mathcal{L}$ and design the discontinuous functionals $u_{s1}(t)$, $u_{s2}(t)$ and $u_{s}$ are the main work of this paper that only the sufficient condition on examining the effectiveness of the closed-loop system is provided for SMO and SMO-CO.
One can obtain the gain $K$ by left and right multiplying $\Xi$ in \eqref{Equ_ClsdLp_Xi} or \eqref{Equ_ClsdLp_Xi_S} with $diag(Q^{-1},I)$, or giving a set of desired poles to $(A+BK)$.
\end{remark}

\section{Example} \label{sec:Example}
In this section, to prove the effectiveness of the proposed method, one numerical example is provided.
The system model parameters in \eqref{Equ_SMO_Original_System} are
\begin{eqnarray*}
&A = \left[\begin{array}{cccc} 0 &0& 1 &0 \\
    0 &0 &0 &1\\
    0 &0.8 &-1.5 &0\\
    -3.7& 0.7 &0 &-4.9 \end{array} \right],
B = \left[\begin{array}{cc}0  & 0 \\
    0 &  0\\
   -2 &  0\\
    0 & 2.5 \end{array} \right], \\
& C_{\omega}= \left[\begin{array}{cc} 2& 0 \\ 0 &1\end{array} \right] ,
\Lambda = \left[\begin{array}{cc} 2& 0 \\ 0 &0.5 \end{array} \right] ,
C =\left[\begin{array}{cccc} 1& 0 &0 & 0\\ 0 &1 &0 & 0\end{array} \right].
\end{eqnarray*}
The initial system state is assumed to be $x_{t=0} =[-200, $ $ -100,$ $ 80, 60]^T$.
The parameter $A$ has positive eigenvalues.
Select the parameters $\bar{A}, \bar{B}, \bar{B}_f, \bar{E}, \bar{C}$ as in \eqref{Equ_SMO_Augment_System} and $\Phi = diag( 0.1,0.01) $.
The pole region for the observer is limited in the LMI region $(D(\pi/3, 20, -10))$ and $(D(\pi/3, 10, -6))$ for the gains $L$ and $\mathcal{L}$ respectively as in \cite{chilali1999robust}. The sum value $(\overline{\mathbf{d}}+ \overline{\mathbf{h}} \Phi^{-1}+\eta)$ in \eqref{Equ_Discontinus_Function} is set to be 1000. The sign function $\textbf{sgn}\left(s_i(t) \right)$ in \eqref{Equ_Discontinus_Function} and \eqref{Equ_Discontinus_Function_S}, is computed as
\begin{equation*}
 \textbf{sgn}\left(s_i(t) \right) = \cfrac{s_i(t)}{\|s_i(t) \|_{\infty}+\varsigma}
\end{equation*}
where $\varsigma = 0.01$.

The observer gains $L$, discontinuous functions $u_{si}$, for $i=1,2$, are chosen to be the same as in \eqref{Equ_Aug_SMO} and \eqref{Equ_Cascade_Observer}.
The controller gains $K$ in \eqref{Equ_SMO_Controller} and \eqref{Equ_SMOCO_Controller} are also selected to be the same.
The gain $K$ of $\bar{K}$  is

\begin{equation*}
K = \left[ \begin{array}{cccc}
 -135.0000  & -0.4000 & -15.7500  &       0 \\
   -1.4800  & 43.4800 &        0  &  6.4400
\end{array}\right]
\end{equation*}

The observer gain $L$ for SMO and SMO-CO is
\begin{equation*}
L = \left[ \begin{array}{cccc}
28.7646 &   1.6434 \\
-0.0599 &  25.4622\\
22317   &   4062.2\\
-1224.4  & 225890\\
-548.4370 & -100.7284\\
-9.8     &   1804.7\\
\end{array}\right]
\end{equation*}

The observe gain $\mathcal{L}$ for SMO-CO is
\begin{equation*}
\mathcal{L} = \left[ \begin{array}{cccccc}
8.6968  &  0.0005   & 0.9997  &  0.0000  & -0.0001  & -0.0000\\
0.0001  &  8.6854   & 0.0000  &  1.0001  &  0.0000  &  0.0000\\
0.5435  &  2.1518   & 7.2862  & -0.0619  &-347.8314 &   0.0113 \\
-2.9929 &  4.3330   & 0.0063  & 3.6058   &-0.0192   & 1087 \\
-0.0137 &  -0.0336  & -0.0023 &   0.0015 &   8.5957 &  -0.0003 \\
0.0057  &  0.0292   & 0.0000  & -0.0015  & -0.0002  &  8.6858 \\
\end{array}\right]
\end{equation*}

The gains $H_1$ and $H_2$ for discontinuous functions $u_{s1}$ and $u_{s2}$ are
\begin{equation*}
\begin{array}{rcl}
H_1 &=& \left[ \begin{array}{cc}
 -6.0100*10^{-9}  &   7.1104*10^{-10 }   \\
7.1104*10^{-10 } &  2.2234*10^{-9  }
\end{array}\right], \\ \\
H_2 &=& \left[ \begin{array}{cc}
1.9193*10^{-4}    &   -7.7971*10^{-6 }   \\
 -7.7971*10^{-6 } &  8.4519*10^{-4 }
\end{array}\right].
\end{array}
\end{equation*}
%%%%%%%%%%%%%%%%%%%%%%%%%%%%%%%%%%%%%%%%%%%%%%%%%%%%%%%%%%%%%%%%%%%%%%%%%%

The gain $H_3$ for discontinuous function $u_{s3}$ is
\begin{equation*}
H_3^T = \left[ \begin{array}{cccccc}
1.9193*10^{-4}  & -7.7907*10^{-6}\\
-7.8048*10^{-6} & 8.4519*10^{-4}\\
-1.4183*10^{-6} & 1.2944*10^{-7} \\
-7.1583*10^{-8}  &-3.8856*10^{-7}\\
2.1464 *10^{-6} &  1.0194*10^{-8}\\
-1.3308*10^{-7}&  -2.5957*10^{-6}\\
\end{array}\right]
\end{equation*}

The estimated disturbance by the SMO and SMO-CO are shown in Fig.\ref{Fig_EstDis_SMOCO}. It shows the SMO and SMO-CO methods are both available for the estimation of disturbances in sinusoidal and step forms.
Connecting the estimated disturbance $\hat{d}(t)$ with a low pass filter, one obtains the filtered estimated disturbance $d_{f}(t)$, i.e. the SMO-LF based estimated disturbance as below
\begin{equation*}
d_{f}(t) = \cfrac{1}{0.01s+1} \hat{d}(t)
\end{equation*}
to compare with the SMO-CO method.
Presented in the left and right sub-figures of Fig.\ref{Fig_EstDis_SMOCO}, during the beginning and the middle periods, the estimated disturbance under SMO-CO is with lower amplitude than the SMO and the SMO-LF methods.
\begin{figure}[hp]
\centering
	\includegraphics[ width=0.8 \linewidth]{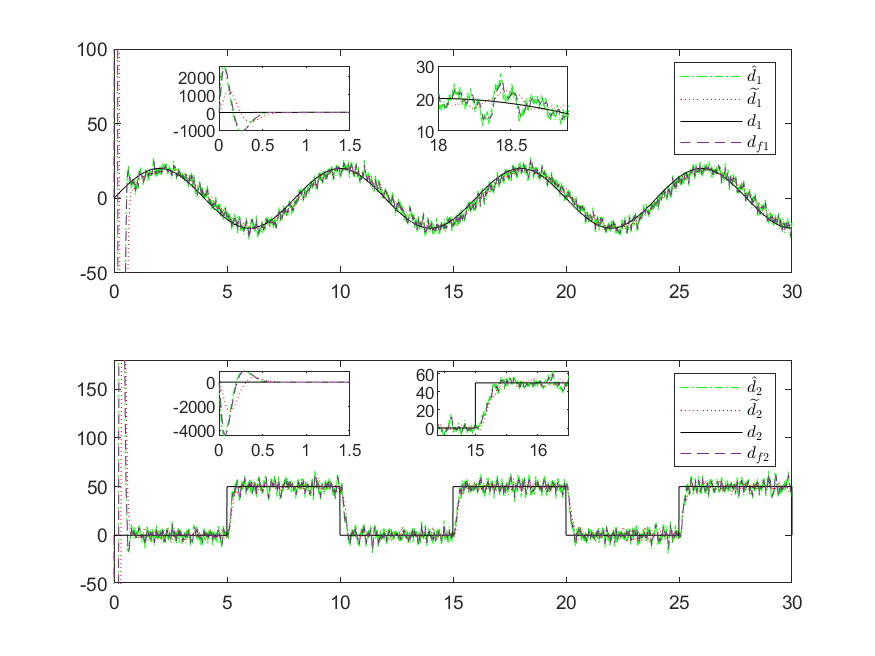}
	\caption{The Real Disturbance $d$, Estimated Disturbances by SMO-CO $\widetilde{d}$, SMO $\hat{d}$ and SMO-LF $d_{f}$  }
    \label{Fig_EstDis_SMOCO}
\end{figure}

Shown in Fig.\ref{Fig_EstErrors_SMOCO}, the observation errors for $x_1-\hat{x}_1$ and $x_1-\widetilde{x}_1$ are with similar amplitude.
But for the other three system states, the observation errors between the real and the SMO-CO based estimated states are lower than the SMO based ones.
\begin{figure}[hp]
\centering
	\includegraphics[ width=0.8 \linewidth]{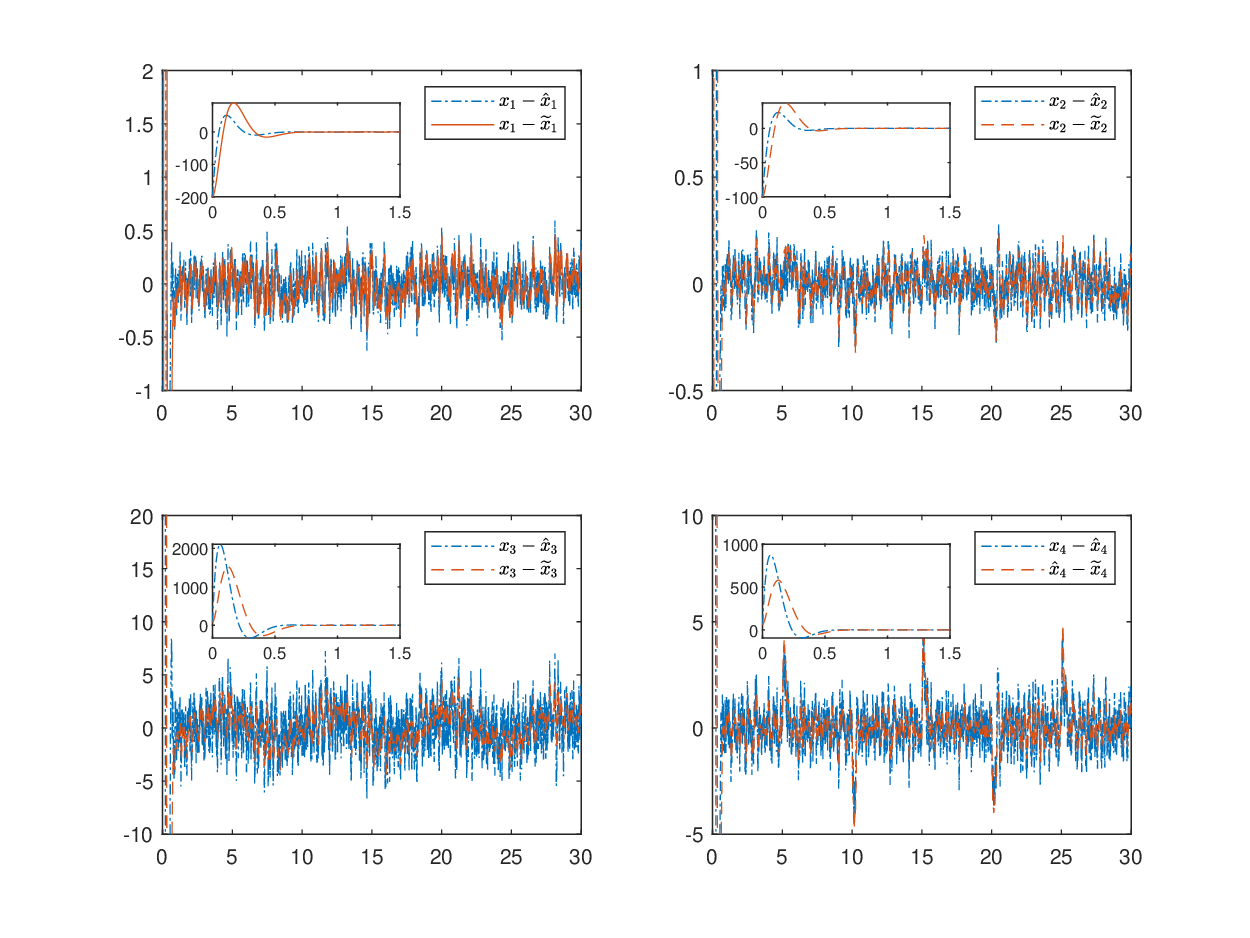}
	\caption{Estimation Errors under SMO and SMO-CO}
    \label{Fig_EstErrors_SMOCO}
\end{figure}

The interesting findings about the control inputs under SMO and SMO-CO are depicted in Fig.\ref{Fig_CtrInputs_SMOCO}.
In the sub-figures of Fig.\ref{Fig_CtrInputs_SMOCO}, at the initial stage, the differences between the control inputs under SMO and SMO-CO are not apparent.
While, after the initial 0.5 seconds, the control under SMO-CO is much smoother and with lowerer amplitude than one under SMO.
It means, though with the same controller gain $K$, the control input with SMO-CO consumes less energy.
\begin{figure}[hp]
\centering
	\includegraphics[ width=0.8 \linewidth]{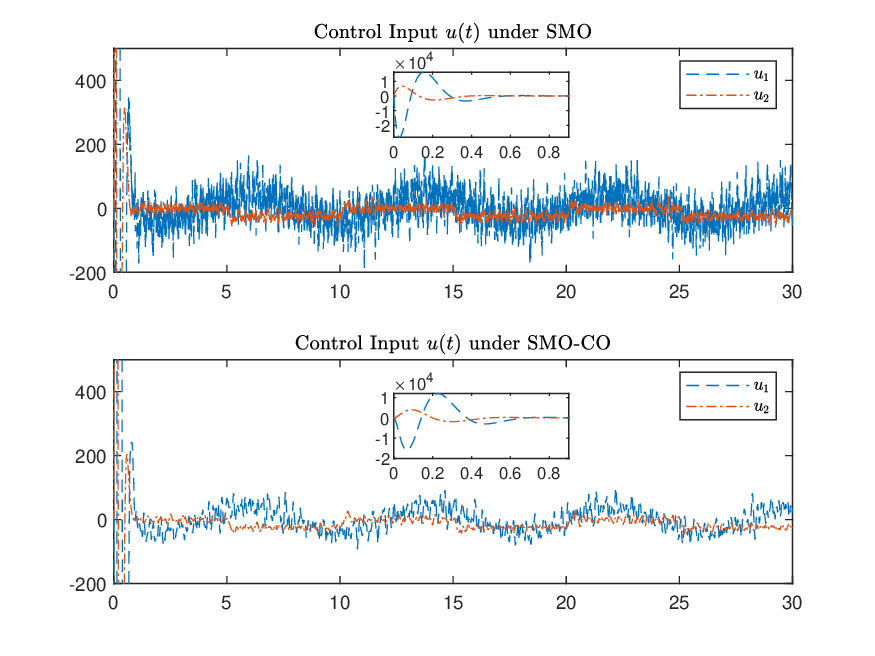}
	\caption{Control Inputs under SMO and SMO-CO}
    \label{Fig_CtrInputs_SMOCO}
\end{figure}

%%%%%%%%%%%%%%%%%%%%%%%%%%%%%%%%%%%%%%%%%%%%%%%%%%%%%%%%%%%%%%%%%

\begin{table}[hp]
\begin{center}
\caption{2-Norm Values Performance Comparison for Estimation Error and Inputs\label{table}}
\begin{tabular}{ cc|cc }
  \hline
  % after \\: \hline or \cline{col1-col2} \cline{col3-col4} ...
  $\|x - \hat{x}\|_2$ & $\|x-\widetilde{x}\|_2$ & $\|u(\hat{x},\hat{d})\|_2$  & $\|u(\widetilde{x},\widetilde{d})\|_2$    \\ \hline
   2.17 & 1.83  &  53.22 & 40.49\\
  \hline
\end{tabular}
\end{center}
\end{table}

Table.\ref{table} presents the 2-norm values of the observation errors of the system state and control consumption during a one-round simulation, for both traditional SMO and SMO-CO methods, over the time interval of 1s to 30s. The SMO-CO based control scheme proposed in this paper has lower observer error and requires less control input energy.

\section{Conclusion} \label{sec:Conclusion}
A new sliding mode based cascade observer scheme, which is composed of the SMO and CO layers, is proposed in this paper.
Multiple discontinuous functionals are designed in the observers to improve estimation quality.
The observer based compensation controller is also offered.
An alternative observer design method, along with a sufficient condition for examining the effectiveness of the closed-loop system, is provided.
The example shows that, compared with the conventional SMO method, the SMO-CO in this paper can further improve the disturbance and system state estimation quality.
Interestingly, with the same feedback gains and compensation method, the control consumption with the SMO-CO is less than that with traditional SMO.

The validity of the SMO-CO is demonstrated through a numerical example. The boundary and Lipschitz continuity limits for the lumped disturbance, the approximation computations for  $\bar{C}\bar{e}(t)$ in the sign function, and the constraints \eqref{Equ_HC_Constraint} and \eqref{Equ_HC_Constraint_S} are required for SMO and SMO-CO in this letter.
%%%%%%%%%%%%%%%%%%%%%%%%%%%%%%%%%%%%%%%%%%%%%%%%%%%%%%%%%%%%%%%%%%%%%%%%%%
\section*{References}
%%%%%%%%%%%%%%%%%%%%%%%%%%%%%%%%%%%%%%%%%%%%%%%%%%%%%%%%%%%%%%%%%%%%%%%%%%%%
\bibliographystyle{plain}
\bibliography{Ref}

\end{document}